\theoremstyle{definition}
\newtheorem{theorem}{Theorem}[section]
\newtheorem{lemma}[theorem]{Lemma}
\newtheorem*{lemma*}{Lemma}
\newtheorem*{proposition*}{Proposition}
\newtheorem{definition}[theorem]{Definition}
\newtheorem{example}[theorem]{Example}
\newcommand{\pnrelbar}{%
  \linethickness{\dimen2}%
  \sbox\z@{$\m@th\prec$}%
  \dimen@=1.1\ht\z@
  \begin{picture}(\dimen@,.4ex)
  \roundcap
  \put(0,.2ex){\line(1,0){\dimen@}}
  \put(\dimexpr 0.5\dimen@-.2ex\relax,0){\line(1,1){.4ex}}
  \end{picture}%
}
\newcommand{\precneq}{\mathrel{\vcenter{\hbox{\text{\prec@neq}}}}}
\newcommand{\prec@neq}{%
  \dimen2=\f@size\dimexpr.04pt\relax
  \oalign{%
    \noalign{\kern\dimexpr.2ex-.5\dimen2\relax}
    $\m@th\prec$\cr
    \noalign{\kern-.5\dimen2}
    \hidewidth\pnrelbar\hidewidth\cr
  }%
}
\newcounter{protocol}
\newenvironment{protocol}[1]
  {\par\addvspace{\topsep}
   \noindent
   \tabularx{\linewidth}{@{} X @{}}
    \refstepcounter{protocol}\textbf{Protocol \theprotocol} #1 \\
    }
  { 
   \endtabularx
   \par\addvspace{\topsep}}
\tikzstyle{box}=[shape=rectangle, text height=1.5ex, text depth=0.25ex, yshift=0.5mm, fill=white, draw=black, minimum height=4mm, yshift=-0.5mm, minimum width=4mm, font={\footnotesize}]
\tikzstyle{Z dot}=[inner sep=0mm, minimum size=2mm, shape=circle, draw=black, fill={rgb,255: red,221; green,255; blue,221}]
\tikzstyle{Z phase dot}=[minimum size=5mm, font={\footnotesize\boldmath}, shape=rectangle, rounded corners=2mm, inner sep=0.2mm, outer sep=-2mm, scale=0.8, tikzit shape=circle, draw=black, fill={rgb,255: red,221; green,255; blue,221}, tikzit draw=blue]
\tikzstyle{X dot}=[Z dot, shape=circle, draw=black, fill={rgb,255: red,255; green,136; blue,136}]
\tikzstyle{X phase dot}=[Z phase dot, tikzit shape=circle, tikzit draw=blue, fill={rgb,255: red,255; green,136; blue,136}, font={\footnotesize\boldmath}]
\tikzstyle{hadamard}=[fill=yellow, draw=black, shape=rectangle, inner sep=0.6mm, minimum height=1.5mm, minimum width=1.5mm]
\tikzstyle{vertex}=[inner sep=0mm, minimum size=1.5mm, shape=circle, draw=black, fill=black]
\tikzstyle{vertex set}=[inner sep=0mm, minimum size=2.25mm, shape=circle, draw=black, fill=white, font={\footnotesize\boldmath}]
\tikzstyle{vertex set phase}=[minimum size=5mm, font={\scriptsize\boldmath}, shape=rectangle, rounded corners=2mm, inner sep=0.2mm, outer sep=-2mm, scale=0.8, tikzit shape=circle, draw=black, fill=white, tikzit draw=red]
\tikzstyle{target}=[draw=black,shape=circle,minimum size=3.5mm,append after command={
\tikzstyle{meter}=[draw=black, inner sep=10, shape=rectangle,path picture={
\tikzstyle{black dash edge}=[-, dashed, dash pattern=on 2pt off 0.5pt, thick, draw={rgb,255: red,200; green,200; blue,200}]
\tikzstyle{hadamard edge}=[-, dashed, dash pattern=on 2pt off 0.5pt, thick, draw={rgb,255: red,68; green,136; blue,255}]
\tikzstyle{dash edge}=[-, dashed, dash pattern=on 2pt off 0.5pt, thick, draw={rgb,255: red,180; green,180; blue,180}]
\tikzstyle{red dash edge}=[-, dashed, dash pattern=on 2pt off 0.5pt, thick, draw={rgb,255: red,255; green,60; blue,60}]
\tikzstyle{red edge}=[-, thick, draw={rgb,255: red,255; green,60; blue,60}]
\tikzstyle{brace edge}=[-, tikzit draw=blue, decorate, decoration={brace,amplitude=1mm,raise=-1mm}]
\tikzstyle{diredge}=[->]
\tikzstyle{mdiredge}=[<->]
\newcommand{\SXE}[1]{#1}
\begin{document}

\title{Multi-agent blind quantum computation \\ without universal cluster states}

\author{Shuxiang Cao}
\email{shuxiang.cao@physics.ox.ac.uk}
\affiliation{Department of Physics, Clarendon Laboratory, University of Oxford, OX1 3PU, UK}

\begin{abstract}
Blind quantum computation (BQC) protocols enable quantum algorithms to be executed on third-party quantum agents while keeping the data and algorithm confidential. The previous proposals for measurement-based BQC require preparing a highly entangled cluster state. In this paper, we show that such a requirement is not necessary. Our protocol only requires pre-shared bell pairs between delegated quantum agents, and there is no requirement for any classical or quantum information exchange between agents during the execution. Our proposal requires fewer quantum resources than previous proposals by eliminating the need for a universal cluster state.
\end{abstract}

\maketitle

\section{Introduction}

Quantum computers built from current technology are difficult to be miniaturized, and unlikely to become personal electronics such as a laptop or a cellphone~\cite{Krantz2019AQubits,Bruzewicz2019Trapped-ionChallenges,Kloeffel2013ProspectsDots,Slussarenko2019PhotonicReview}. Therefore, cloud-based services are considered the most applicable approach to offer the general public access to quantum computers. It is natural to ask whether the privacy of the quantum algorithm can be kept when one does not have complete control of the quantum hardware. Blind quantum computing (BQC) aims to solve this problem. Quantum algorithms can be executed with BQC protocols on third-party quantum agents while keeping the algorithm, data, and results confidential~\cite{Fitzsimons2017PrivateProtocols, Broadbent2008UniversalComputation}.

Here we discuss two ways to implement universal quantum computation. One is gate-based quantum computing (GBQC)~\cite{Michielsen2017BenchmarkingComputers}. This method starts with a pure quantum state, usually by resetting all qubits to zero. Then it transforms the quantum state using a sequence of quantum gates. The final output state carries the processed information. The other method is called measurement-based quantum computing (MBQC) or one-way quantum computation~\cite{Briegel2009Measurement-basedComputation,Raussendorf2006AComputer,Gross2007Measurement-basedModel,Kissinger2019UniversalMeasurements}. This method prepares a highly entangled state of multiple qubits, often referred to as a \textit{cluster state}~\cite{Nielsen2006Cluster-stateComputation}, then performs a sequence of measurements and corrections to implement computation. Eventually it can give the same result as the GBQC.

The Universal Blind Quantum Computing (UBQC) protocol was proposed in \cite{Broadbent2008UniversalComputation} based on the MBQC framework. UBQC protocol utilizes a universal cluster state and can be implemented by a semi-classical client with a single agent or an entirely classical client with multiple agents. 
There are other proposals implementing BQC with a single agent and an entirely classical client are possible, however, these proposals require some computational assumptions~\cite{Mahadev2018ClassicalComputations,BrakerskiQuantumClassical,Cojocaru2019QFactory:Preparation}. 

In this paper, we make use of a quantum graphical reasoning method, ZX-Calculus, to derive a BQC protocol that can be implemented with multiple agents and an entirely classical client. The UBQC protocol utilizes a universal cluster state, forcing all the information describing the algorithm to be encoded in the measurement axis. It sacrifices the ability to encode information into the entanglement structure between qubits. Contrarily, our method does have information encoded in the entanglement structure, and does not require a universal cluster state. This makes our protocol more resource-efficient.

This paper is arranged as follows:
Section \ref{sec:zx} describes ZX-calculus, a graphical quantum reasoning technique that we use to derive our result. Section \ref{sec:zx_bqc} explains our BQC protocol. Section \ref{sec:proof} gives proof of the correctness and secureness of our protocol.  
Section \ref{sec:discussion} discusses the compatibility with existing verification protocols, and quantifies the resource cost of our protocol and the UBQC protocol. Section \ref{sec:conclusion} summarizes the paper.

\section{Background}

\subsection{Universal Blind quantum computation\label{sec:BQC}}

The Universal Blind Quantum Computing (UBQC) protocol employs the MBQC method to implement BQC \cite{Broadbent2008UniversalComputation}. Under the MBQC framework, the algorithm can be described with only the entanglement structure between qubits and each qubit's measurement axis. To make the algorithm blind to the agents, the information each agent possesses, the entanglement structure, the measurement axis, and the measurement output of the agents, must not reveal any information about the algorithm. A valid BQC protocol must make this information independent from the delegated task.
 
To make the entanglement structure of the delegated task independent from the quantum algorithm, UBQC utilises a universal cluster state, which can implement arbitrary quantum algorithms with the same entanglement structure but a different measurement axis. Such a method concentrates the information describing the quantum algorithm on the measurement axis. UBQC protocol uses the brickwork cluster state to implement MBQC. Different quantum gates can be implemented by measuring the cluster state with different angles in sequence. For the brickwork state, the qubits are measured from left to right. Based on the measurement result, \textit{corrections} is applied to the following qubits on each step. The calculated result is then stored in the qubit on the right end of the brickwork cluster state and can be further processed by piling up more elementary components, or more ``bricks''.

\begin{example}\label{fig:UBC_brick_work}
Brickwork cluster state and MBQC with brickwork resource state. Each node denotes a qubit prepared in $\ket{+}$ state. Wires connecting two qubits denote an entanglement that is generated by applying a CZ gate between two qubits. The result is stored inside the very right qubits after measuring each qubit from left to right. The angle inside each node represents the angle of measurement that would be applied to the corresponding qubit. (a) The layout of a typical brickwork cluster state. The grey square shows a fundamental element of the brickwork cluster state. (b) Implement a Hadamard gate. The square on the left side denotes the qubits that hold the computation output. (c) Implement T ($\pi/8$) gate. (d) Implement identity gate. (e) Implement a CNOT gate.
\vskip 1em
\centering

\[\scalebox{0.9}{\tikzfig{demostration/UBC_brick_work}}\]
\end{example}

Usually the measurement axis is defined by doing a single qubit rotation before the physically implementable measurement, which is usually the Pauli Z axis. To make the measurement axis independent to the quantum algorithm, a second agent is introduced to implement all or part of the single qubit rotation. When only one remote agent is available, UBQC protocol requires the client to be semi-classic; that is, the client can manipulate a minimum of a single qubit. It also requires the remote server to exchange quantum information by physically swapping qubits or establishing new entanglement. See example \ref{fig:UBQC_protocols}(a). In the original proposal, known as the ``prepare-and-measure'' method, the semi-classical agent effectively prepares the measurement angle. The agent and the client share the entanglement of each qubit, and the client measures its qubit at a random angle. This random angle would be ``teleported'' to the agent and affect the cluster state. The agent only needs to initialise the qubit into a superposition state and directly measure the qubit without rotating the qubit ~\cite{Broadbent2008UniversalComputation}. Alternatively, the agent can provide the cluster state and send the state back to the agent, known as the ``measurement-only'' method. Only the agent has access to the measurement angle~\cite{Morimae2013BlindMeasurements}. The UBQC protocols can also be implemented with multiple remote quantum agents and a purely classical client when the two agents' communication is restricted. See example \ref{fig:UBQC_protocols}(b). A uniformly distributed measurement axis for the delegated agent can be implemented on the first agent by simply requesting the second agent to measure their entangled qubits from a random axis. Then the computation can continue with the same method for a single agent UBQC. 

The measurement outcome for the "prepare-and-measure" approach is obfuscated by randomly flipping the outcome distribution during the measurement. Such obfuscation can be done by randomly choosing to measure at its original or with a $\pi$ difference. The measurement outcome would flip when the measurement angle is chosen with $\pi$ difference. Then the client classically restores the distribution after the measurement. Since the agent does not know if the distribution has been flipped or not, it can only observe a uniform distribution of 0 and 1 outcomes.

\begin{example}\label{fig:UBQC_protocols}
    Two protocols of universal blind quantum computation (UBQC). Both methods execute quantum algorithms with MBQC on the brickwork cluster state prepared on the remote agents. (a) Protocol with a semi-classical client and a single remote agent. The client can manipulate only one qubit and exchange qubits with the agent—the client prepares phase or measures the qubit at a random angle. The agent would know the actual rotation angle obfuscated by this random angle. (b) Protocol with a full classical client and multiple remote agents with shared entanglement. A second agent is introduced to replace the semi-classical client.

\centering
\tikzfig{abstract/UBQC_protocols}
\end{example}

\SXE{
It is worth mentioning that a circuit-based BQC method proposed in \cite{Sano2020BlindComputer} utilises a similar philosophy as the UBQC protocol. A ``universal circuit'' that can implement arbitrary operation by modifying the single-qubit gate rotation angle has been introduced in the proposal. The entanglement structure is then irrelevant to the circuit on the agent, and the rotation angles are obfuscated with quantum computing on encrypted data (QCED) \cite{Broadbent2018HowComputation,Fisher2014QuantumData,Broadbent2015DelegatingComputations}, which requires the agent to exchange quantum information with the client. The circuit-based protocol computes the cluster state in a circuit-based manner. However, it still requires exchanging the same amount of quantum information between the client and agent as the UBQC protocol to implement ``correction''.}

\subsection{ZX-Calculus\label{sec:zx}}

In this section, we provide a brief review of the ZX-Calculus~\cite{Coecke2018PicturingProcesses}. The ZX-Calculus is a diagrammatic method for reasoning the linear maps of quantum operations. With the gate representation of quantum computation, we decompose a unitary operation into a sequence of predefined gates; with ZX-Calculus, we decompose the unitary into a network, the so-called \textit{ZX-diagram}, consists of red and green spiders. In the following discussion, we ignore the scalars of the ZX-diagrams.

\paragraph{ZX-diagram}

A ZX-diagram consists of \textit{wires} and \textit{spiders}, corresponding to legs and tensors in the tensor network language~\SXE{\cite{Biamonte2017TensorNutshell, PhysRevLett.125.150504}}. There are two types of spiders: $Z$ spiders and $X$ spiders noted as green and red dots. The spiders are defined as a tensor parameterized by a single phase variable. The opened wire can be considered an input or output of the ZX-diagram. The summary of the definition of the basic building blocks of ZX-diagram is shown in example \ref{fig:zx-definations}. A quantum circuit can be easily rewritten to a ZX-diagram with rules provided in example \ref{fig:zx-definations}. The $X$ gate can be replaced with a red spider, and the $Z$ gate can be replaced with a green spider. The rotation angle is represented as the phase of each spider. 

\begin{example}\label{fig:zx-definations}
Basic building blocks of ZX-diagram. ZX-diagram is a notation representing tensor networks. Any quantum circuit can be converted into a quantum tensor network and further represented by ZX-diagram~\cite{Backens2016CompletenessZX-calculus,Jeandel2020COMPLETENESSZX-CALCULUS}. ZX-diagram consists of \textit{spiders}, which is a tensor with constraints above. Standard quantum circuits can be converted into ZX-diagram with the following rules:

\centering
$\tikzfig{def/Zsp-a} \ := \ket{0...0}\bra{0...0} +
e^{i \alpha} \ket{1...1}\bra{1...1} = \begin{bmatrix}
1 &  & \\
 & ... & \\
 &  & e^{i \alpha}
\end{bmatrix}$

$\tikzfig{def/Xsp-a} \ := \ \ket{+...+}\bra{+...+} +
e^{i \alpha} \ket{-...-}\bra{-...-} = \ H^{\otimes n} \begin{bmatrix}
1 &  & \\
 & ... & \\
 &  & e^{i \alpha}
\end{bmatrix} H^{\otimes n}$ \\

\vskip 1em

    \tikzfig{abstract/ZX_basics}
\end{example}

Instead of directly contracting the tensor network, ZX-Calculus provided a set of rules to manipulate a ZX-diagram while keeping them equivalent. ZX-calculus is complete on Clifford+T language with a set of rules are specified\cite{Backens2016CompletenessZX-calculus,SchroderdeWitt2014TheMechanics}. Some of these rules are shown in example~\ref{fig:zx-rules}.

\begin{example}\label{fig:zx-rules}
The power of ZX-Calculus is it derives a set of rules to transfer a ZX-diagram to another one while keeping them equivalent. Here we show several rules that we will use later. The $\spiderrule$ rule indicates that any two spiders with the same colour can be merged. The $\hadamardrule$ indicates that spider colour can be changed by adding a Hadamard spider on every wire of the spider. $\pirule$ indicates that a Pi operation with a different colour can be copied and moved to other wires while changing the sign of the spider's phase. Also, $\identityrule$ and $\hcancelrule$ can help generate or remove redundant spiders and Hadamard nodes. $\bialgrule$ the bialgebra rule. The Hopf law or Antipode law $\hopfrule$ shows two parallel Hadamard wire results cancelling each other.

\vskip 1em

\centering
\tikzfig{def/reduced-rules}
\end{example}

Instead of representing the gates with both input wires and output wires, spiders can have just a single wire. With a single output wire, the spider represents a \textit{bra} notation. Such \textit{bra} notation denotes to a post-selection operation when extracting the diagram into a quantum circuit. While with a single input wire, the spider represents a \textit{ket} notation. Such \textit{ket} notation denotes preparation of the initial state of the quantum circuit. 

\begin{example}    \label{fig:post_selection}
Post-selection in ZX-diagram. The post-selection is represented by attaching a spider with no output to the output wires of the ZX-diagram. Apply a red dot denoting post select the $\ket{0}$ state, and a green dot means $\ket{+}$ state.

\centering
\[\scalebox{1}{\tikzfig{abstract/post_selection}}\]

\end{example}

A ZX-diagram can also represent a density matrix. For a pure state $\ket{\psi}$, the density matrix is $\rho = \ket{\psi}\bra{\psi}$, which is the tensor product of $\ket{\psi}$ and $\bra{\psi}$. In a ZX-diagram, a tensor product can be represented by putting two disconnected diagrams together. By writing $\ket{\psi}$ and $\bra{\psi}$ into the same diagram, we have the ZX-diagram of the density matrix $\rho$ shown in example~\ref{fig:density_matrix} (b).
 
\begin{example}    \label{fig:density_matrix}
The representation of a pure state density matrix. Putting two isolated ZX-diagram together gives the tensor product between to ZX-diagram. Suppose density matrix is $\rho=\ket{\psi}\otimes\bra{\psi}$, it can be represented by placing two ZX-diagram of $\ket{\psi}$ and $\bra{\psi}$ together.

\centering
\[\scalebox{1}{\tikzfig{abstract/density_matrix}}\]

\end{example}

 A mixed state can be generated by partially tracing away part of a pure system. The reduced density matrix of which some qubits are traced away can be represented by directly connecting the wire of the traced-away qubits between the $\ket{\psi}$ diagram and the $\bra{\psi}$ diagram. This is shown in example~\ref{fig:reduced_density_matrix}.

\begin{example}    \label{fig:reduced_density_matrix}
The ZX-diagram representation of a reduced density matrix by tracing away one of the qubits. The reduced density matrix can be represented by directly connecting the open wires of the qubit. In this example, the two open wires of the last qubit are connected, marked with a rectangle.

\centering

\[\scalebox{1}{\tikzfig{abstract/reduced_density_matrix}}\]

\end{example}
Although converting an arbitrary quantum circuit into a ZX-diagram is easy, it is not always trivial to convert a ZX-diagram back into a quantum circuit. A ZX-diagram can always represent an arbitrary gate in quantum circuits; however, a ZX-diagram can also represent a non-unitary tensor. For example, the number of input and output wires can be different. The process of converting a ZX-diagram into a quantum circuit is often referred to as \textit{circuit extraction}~\cite{Backens2020ThereTale}.

\paragraph{Garph-like ZX-diagram}

There is a special form of ZX-diagram that is particularly useful, called \textit{graph-like ZX-diagram}. ~\cite{Duncan2019Graph-theoreticZX-calculus}. 

\begin{definition}\label{definition:graph-like-diagram}
A diagram is called graph-like if 
\begin{enumerate}
    \item All spiders are Z-spiders.
    \item Spiders are only connected via Hadamard wires.
    \item There are no parallel Hadamard wires or self-loops.
    \item Every input or output is connected to a Z-spider.
    \item Every Z-spider is connected to at most one input or output.
\end{enumerate}
\label{definition:graph-like}
\end{definition}

An example is shown in example~\ref{fig:graph_like_diagram}.
 
From the Gottesman–Knill theorem, Quantum circuits containing gates only from the Clifford group can be simulated efficiently on a classical computer \cite{Aaronson2004ImprovedCircuits}. After a quantum circuit is written into a ZX-diagram, it is possible to simplify the diagram and remove Clifford operations before further modifications. This technique has been developed for circuit simplification~\cite{Duncan2019Graph-theoreticZX-calculus}.

\begin{example}\label{fig:graph_like_diagram}
A graph-like ZX-diagram. A graph-like ZX-diagram must have only Z-spiders (green), and the internal connections are only Hadamard wires (dashed-blue line). There are no self-loops or parallel wires, and each spider is connected to at most one input or output. (a) demonstrates \SXE{a schematic of} an original circuit written into ZX-diagram. \SXE{It can be created by substituting each quantum gate in the circuit with its corresponding ZX-diagram component. This diagram contains both X and Z spiders, with each row representing a qubit and each column denoting a layer of the circuit.} (b) \SXE{demonstrate schematics of a graph-like ZX-Diagram, comprised solely of Z spiders and Hadamard edges. A graph-like ZX-Diagram that is equivalent to a diagram like (a) can always be found \cite{Duncan2019Graph-theoreticZX-calculus}. In a graph-like ZX-diagram, the columns and rows no longer correspond directly to a gate layer or a qubit.} The open Hadamard edges with the ellipsis denote some arbitrary configuration that is abbreviated.
    
\vskip 1em    

\centering
\tikzfig{steps/graph-like-ZX-diagram}
\end{example}

\subsection{Flow and determinism in MBQC}

Suppose a circuit is converted into a ZX-diagram and then transformed into a graph-like ZX-diagram.  Then we modify the ZX-diagram and give every spider an extra regular wire. See example \ref{fig:pi_propergation_measurement_sequence}(a). The new diagram equals the original diagram when all these wires terminate with a zero-phase Z spider. 

Now let us consider each spider corresponds to a qubit, and measure the qubit closes the added open wire with either a zero-phase spider or a $\pi$-phase spider. When we get an unwanted $\pi$-phase spider, see example \ref{fig:pi_propergation_measurement_sequence}(b), we restore the state by applying extra single-qubit operations on the related qubit which we have not been measured. This is equivalent to the \textit{correction} operation for MBQC. See example \ref{fig:pi_propergation_measurement_sequence}(c). In this way, we can always obtain the same distribution as the original quantum circuit.

\begin{example}\label{fig:pi_propergation_measurement_sequence}
      Measurement sequence of implementing correction. (a)The spiders have been grouped based on their distance from the output spiders. Shown in grey squares. (b)The measurement has been performed on the group with the largest distance; some unexpected outcome has been measured. (c) Apply the $\pirule$ and $\hadamardrule$ to recover the state. The phase can always get propagated into groups with a lower distance to the output spiders.

\centering
\[
\tikzfig{steps/MBQC_with_graph_like_diagram}
\]
\end{example}

Pushing the $\pi$ phase into unmeasured spiders is a simple correction strategy; however, it does not work for arbitrary graphs. For example, if a non-output qubit is being measured with an unexpected result, and all its neighbouring qubit has already been measured, then there is no qubit the $\pi$ phase can be pushed to. For a diagram that can utilize this single qubit correction strategy, the diagram must admit a \textit{causal flow}~\cite{Danos2005DeterminismModel}. 

\begin{definition}[Causal flow~\cite{Danos2005DeterminismModel,Browne2007GeneralizedComputation}]
A \textit{Causal flow} is a pair $(f,\prec)$ with $\prec$ a partial order and $f$ a function $f :O_c\rightarrow I_c$ on open graph state $(G,I,O)$ which associates with every non-output vertices a set of non-input vertices such that 

\begin{itemize}
    \item \SXE{$u\in N(f(u))$.}
    \item $u\prec f(u)$
    \item \SXE{$u\prec v$ for all $v\neq u$, $v\in N(f(u))$.}
\end{itemize}
\end{definition}
\SXE{where $N(K)$ denote the neighbor vertices of $K$.}

Consider we are looking for a strategy to execute the graph with the MBQC method, which consists of measuring the qubits and modifying one qubit $f(u)$ after each measurement. The partial order of the causal flow describes a possible order to execute the measurement. $f(u)$ qubits must be measured after $u$, which gives the second rule. Also, applying the correction would affect not only $u$, but all the neighbours of $f(u)$. Therefore these neighbors must be measured after $u$.

However, admitting a causal flow is unnecessary for a graph to be executed with MBQC~\cite{Danos2005DeterminismModel,Duncan2010RewritingFlow,Duncan2012AComputing,Browne2007GeneralizedComputation}. Recall that modifying one qubit can correct the unexpected measurement outcome from a graph with the causal flow. There are at least two improvements to the flow mechanism that can be applied.

First, instead of correcting the state by modifying one qubit, the idea of graph stabilizers can be used to obtain a set of qubits and corresponding operations to correct the state. A stabilizer of a graph is a set of operations that can be applied to the state while keeping it identical. To correct the state, we could consider the unexpected gate as part of a stabilizer, which would keep the state unchanged if we complete it. Such stabilizers can be found intuitively with ZX-calculus by pushing the unexpected $\pi$ phase around the phase-free graph~\cite{Backens2013TheMechanics,Backens2016CompletenessZX-calculus}. We then could relax $f(u)$ to be multiple qubits called the \textit{correction set}.

Second, for the qubits measured on the Pauli basis, some correction does not need to be applied to the qubit physically~\cite{DuncanRewritingFlow,Browne2007GeneralizedComputation}. For example, pushing a $\pi$ spider through another $\pi$ phase spider with a different colour does not need to apply physical corrections because $-\pi$ and $\pi$ phase are equivalent. 

The above two modifications lead to the definition of \textit{Pauli flow}.

\begin{definition}[Pauli flow~\cite{DuncanRewritingFlow}]\label{def:pauli_flow}

An open graph state $(G,I,O)$ has Pauli flow if there exists a map $f:O^c\rightarrow F(I^c)$ and a partial order $\prec$ over $V$ such that for all $u\in O^c$

\begin{enumerate}

    \item if $v \in f(u)$, and $\lambda(v) \notin {X, Y}$ then $u\prec v$,
    \item if $v \neq u$, and $\lambda(v) \notin {Y, Z}$ then $v \notin Odd(f(u))$,
    \item if $v \preceq u, v \in f(u)$ and $\lambda(v) = Y$ then $v \in Odd(f(u))$,
    \item if $\lambda(u) = XY$ then $u \notin f(u)$ and $u \in Odd(f(u))$,
    \item if $\lambda(u) = XZ$ then $u \in f(u)$ and $u \in Odd(f(u))$,
    \item if $\lambda(u) = YZ$ then $u \in f(u)$ and $u \notin Odd(f(u))$,
    \item if $\lambda(u) = X$ then $u \in Odd(f(u))$,
    \item if $\lambda(u) = Z$ then $u \in f(u)$,
    \item if $\lambda(u) = Y$ then either: $u \notin f(u)$ and $u \in Odd(f(u))$ or $u \in f(u)$ and $u \notin Odd(f(u))$.
\end{enumerate}
\end{definition}

Where $Odd(K) = \{u,|N(u)\cap K|=1~mod~2\}$ is the odd neighbour of $K$, i.e. the set of vertices which have an odd number of neighbours in $K$. $N(K)$ denote the neighbor vertices of $K$. $|K|$ denote the number of vertices in $K$. $\lambda(u)$ denote the measurement plane of $u$, for green spiders with $0$ or $\pi$ phase, the measurement plane is $X$. The measurement plane is $XY$ for other arbitrary phases.

In the following sections, we show that all the rewrite rules used to implement the protocol would at least preserve the Pauli flow of the graph. This guarantees the transformed ZX-diagrams can be executed on MBQC hardware.

\section{BQC from ZX-calculus}\label{sec:zx_bqc}

The previous section shows that a graph-like ZX-Diagram can fully describe the information needed to execute a quantum algorithm. This information includes each spider's phase, the connectivity configuration, and the number of spiders used in the graph. The outcome of each measurement may also contain information about the result of the algorithm. 

The UBQC protocol splits the initial phase into two parts to obfuscate this information. Each is independent of the initial phase; however, the execution would yield the same result when combined. The measurement results are obfuscated by randomly flipping the qubits before measurement and classically restoring them by the client after measurement. For obfuscation of the connectivity, the UBQC protocol utilises a universal cluster state; therefore, any algorithm would have an identical entanglement structure. Using the universal cluster state forces all the information of the algorithm to be stored in the phases. Limiting the ability to represent information with the layout of the cluster state requires extra resources. Our proposal obfuscates the connectivity by making those connectivities carry information about the algorithm entanglement structure and become connectivities between two different agents. Because each agent only possesses the fragment of the diagram that executes on itself, it loses track of the entanglement structure of the algorithm. Because our protocol does not require the universal cluster state, and encoding a considerable portion of the algorithm into the connectivity between the spiders, it requires fewer resources than the UBQC protocol.

\[\centering
  \tikzfig{overview}\]

In this section, we show how to implement our protocol with ZX-Calculus. The phases and measurement results are obfuscated with similar approaches to the UBQC protocol. With proper manipulation of the ZX-diagram, the connectivity information can all be hidden by ensuring that each agent only possesses one end of the entanglement that holds the information about the entanglement structure.

\subsection{Defining blocks\label{sec:define_blocks}}

Here we introduce the concept of \textit{Blocks}.  Blocks $B_i$ are a set of spiders that are hosted by the same agent. $B({V}) = B_k$ denote the block spider $V$ belongs to block $B_k$.  For simplicity, we define $B_i-1$ = $B_{i-1}$.

\begin{definition}[Spider depth]
For a given quantum algorithm represented in ZX-Diagram $G(E_N,E_H,V)$, let $d_G(V_1,V_2)$ denote the distance of $V_1$ and $V_2$ in graph $G$, $V_o$ denotes all output spiders. Define \text{depth} of the spider $V_i$ in the graph $G$ as
\begin{equation}
    D(V_i) = min(d(V_i,V_j)), \forall V_j \in V_O
\end{equation}
\end{definition}

\begin{definition}[Blocks initialization]
 Define the block as a set of spiders, and spider $V_i$ belongs to block $B(V_i)$, given by 
 
\begin{equation}
    B(V_i) = B_k  
\end{equation} 

where 
 
\begin{equation}
    k = D(V_i) 
\end{equation}
 
\end{definition}

To illustrate this partition, consider a quantum circuit directly transformed into a ZX-diagram. This partition simply categorises each layer of the quantum circuit into an individual block.

\subsection{Phase obfuscation}

Before we move into the method, let's start with a few definitions.

\begin{definition}[Semi-graph-like diagram]\label{definition:semi-graph-like-diagram}
A diagram is called semi-graph-like if 
\begin{enumerate}
    \item All spiders are Z-spiders.
    \item There are no parallel Hadamard wires or self-loops.
    \item Every input or output is connected to a Z-spider.
    \item Every Z-spider is connected to at most one input or output.
\end{enumerate}
\end{definition}

The difference between a semi-graph-like diagram and a graph-like diagram is that it allows regular edges to be present in the graph.

\begin{definition}[Reduced graph-like diagram]\label{definition:semi-graph-like-diagram-reduce}
A graph-like diagram $G_g$ is the \textit{reduced graph-like diagram} of a semi-graph-like diagram $G_{sg}$ if $G_{sg}$ can be transformed into $G_g$ with only rule $\spiderrule$.
\end{definition}

Now we consider the obfuscation process of the phase of a spider. The goal of this obfuscation step is to rewrite the graph so that the individual phase value in the new graph is independent of the phase values in the original graph. Such rewrite can be implemented by applying the $\spiderrule$ to make a single spider become multiple spiders connected with regular edges, see example~\ref{fig:angle_complication}. Suppose the original spider has phase $\alpha$. The new phases for new spiders are $\alpha_i$. Rule $\spiderrule$ shows that the rewrite graph is equivalent to the original graph if phase $\alpha_i$ is chosen to satisfy

\begin{equation}
    \alpha = \sum_i \alpha_i
\end{equation}

When the operation to implement phase $\alpha$ is split into multiple operations across different agents, each single agent would not be able to find the original phase $\alpha$. 

\begin{example}\label{fig:angle_complication}
Phase obfuscation with rule $\spiderrule$. (a) is the original spider with multiple inputs and outputs. (b) is equivalent to (a), while the phase has been split into two spiders connected with a regular wire. $\alpha_1$ and $\alpha_2$ can be chosen randomly with the restriction $\alpha = \alpha_1 + \alpha_2$. Adding one extra spider gives minimum protection to hide the rotation phase from the agent. (c) depict a more general form where the spider can be split into $n$ spiders. 

\centering
\tikzfig{obfuscation/angle_complication}
\end{example}

We have shown the obfuscated diagram is equivalent to the original diagram, and next, we show the obfuscated diagram can also be executed on the physical hardware in an MBQC manner. 

\begin{theorem}[Spider split flow preservation]\label{fig:execution_strategy}
Given a graph state $(G,I,O)$, where $G$ is a graph-like diagram $G(E_N,E_H,V)$. \textit{Split} splider $V_i\in V$ into N spiders, $\mathbf{\Tilde{V}_i} = \{\Tilde{V}_i^{(0)}...\Tilde{V}_i^{(N)}\}$, that is construct a new graph $\Tilde{G}{(\Tilde{E}_N,\Tilde{E}_H),\Tilde{V}_i}$, where $\Tilde{V}=V$ except $V_i$ is replaced with a set of spiders $V_i^{k}$. $V_i^{k}$ are connected through regular edges. If $(G,I,O)$ admit a Pauli flow $(f,\prec)$, the new graph $\Tilde{G}$ also admit a Pauli flow.

\end{theorem}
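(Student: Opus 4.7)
The plan is to construct an explicit Pauli flow $(\tilde{f}, \tilde{\prec})$ for the split graph $\tilde{G}$ directly from the given flow $(f, \prec)$ on $G$, and then verify each clause of Definition~\ref{def:pauli_flow}. The intuition is that replacing one spider by a chain of $N+1$ spiders connected by regular edges amounts to inserting a sequence of trivial ``teleportation'' measurements, each of which can be corrected by the next spider down the chain. So the new flow should reuse the old flow outside the chain and add a purely sequential flow along the chain.

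First I would fix notation: suppose the external Hadamard-edge neighbors $N_G(V_i)$ of $V_i$ are partitioned among the new spiders $\tilde{V}_i^{(0)}, \dots, \tilde{V}_i^{(N)}$, with consecutive ones joined by regular edges. I would let $\tilde{V}_i^{(0)}$ inherit $V_i$'s position in the causal order and define
\begin{equation}
  \tilde{\prec} \ = \ \prec \ \cup\ \bigl\{\,\tilde{V}_i^{(j)} \prec \tilde{V}_i^{(j+1)} : 0\le j<N\,\bigr\},
\end{equation}
closed under transitivity, with every $u\in V\setminus\{V_i\}$ inheriting its relation to $V_i$ as a relation to whichever endpoint of the chain it is connected to. I would then set $\tilde{f}(\tilde{V}_i^{(j)}) = \{\tilde{V}_i^{(j+1)}\}$ for $0\le j<N$, define $\tilde{f}(\tilde{V}_i^{(N)})$ from the original $f(V_i)$, and for $u\neq V_i$ replace any occurrence of $V_i$ in $f(u)$ by the chain endpoint carrying the edge incident to $u$.

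Next I would verify the nine clauses of the Pauli-flow definition by case analysis. The chain spiders are measured in the XY-plane (with obfuscated phases $\alpha_j$), so only conditions (1), (2) and (4) are active for them, and all three follow immediately from the fact that each chain spider's correction set is its successor, which is both adjacent and ordered later. For spiders $u\notin\{\tilde{V}_i^{(j)}\}$ the local neighborhoods differ from $G$ only at the endpoint(s) of the chain they attach to, and $Odd_{\tilde{G}}(\tilde{f}(u))$ agrees with $Odd_{G}(f(u))$ on all vertices outside the chain, so conditions (1)--(9) transfer directly from the original flow.

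The main obstacle is the interpretation of Pauli flow for a \emph{semi-graph-like} diagram (Definition~\ref{definition:semi-graph-like-diagram}), because Definition~\ref{def:pauli_flow} is stated for open graphs in which every edge corresponds to a CZ interaction, i.e.\ a Hadamard edge. I see two routes. Route~(a): adapt the statement of Pauli flow so that a regular edge between two Z-spiders is treated as a teleportation wire carrying a deterministic phase correction. Route~(b): first rewrite each regular edge using $\hcancelrule$ and $\hadamardrule$ into a Hadamard-edged chain with an auxiliary Z-spider in between, thereby reducing to a genuine graph-like diagram, and then prove existence of Pauli flow for this enlarged graph. Route~(b) is cleaner because it keeps us inside the standard MBQC formalism, and its correctness reduces to checking that the auxiliary Clifford spiders (measured in a Pauli basis) admit a correction from their successor, which is automatic from the construction above. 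In either route, the sequential chain structure is what makes the verification go through, since it guarantees that every unwanted $\pi$-phase produced by measuring $\tilde{V}_i^{(j)}$ has a strictly later spider to be absorbed into.
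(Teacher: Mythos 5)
Your route is genuinely different from the paper's, and the difference matters. The paper does not build a sequential flow along the chain at all: it measures all of $\tilde V_i^{(0)},\dots,\tilde V_i^{(N)}$ as a single logical unit occupying $V_i$'s slot in the partial order, takes the effective outcome to be the sum $\sum_n \phi_{\tilde V_i^{(n)}}$ obtained by fusing the chain back with rule $\spiderrule$, and lets the whole set inherit $V_i$'s predecessors, successors and correction data. The ``main obstacle'' you identify --- that Definition~\ref{def:pauli_flow} is not stated for diagrams with regular edges --- is resolved in the paper by Definition~\ref{def:sgl-diagram}, which \emph{defines} Pauli flow of a semi-graph-like diagram as Pauli flow of its reduced (fully fused) graph-like diagram; under that definition the theorem is nearly immediate, since fusing the chain returns exactly $G$. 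Your approach stays inside the standard per-vertex flow formalism, which is more informative, but it carries a verification burden the paper's route avoids.

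That burden is where the gap lies. First, for a vertex $w$ outside the chain with $V_i\in f(w)$, you replace $V_i$ in $f(w)$ by ``the chain endpoint carrying the edge incident to $w$'' and assert that $Odd_{\tilde G}(\tilde f(w))$ agrees with $Odd_{G}(f(w))$ outside the chain. It does not: $V_i$ contributes its entire external neighbourhood $N_G(V_i)$ to the odd-neighbourhood computation, whereas a single chain vertex $\tilde V_i^{(k)}$ contributes only the subset $S_k$ of external neighbours assigned to it, so every vertex of $N_G(V_i)\setminus S_k$ flips parity in $Odd(\tilde f(w))$ and can newly violate conditions (1)--(3) for $w$. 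The repair is to put the \emph{whole} set $\{\tilde V_i^{(0)},\dots,\tilde V_i^{(N)}\}$ into $\tilde f(w)$, which is exactly the treat-the-chain-as-one-unit move the paper makes. Second, in your route (a) the chain conditions are not ``immediate'': with $\tilde f(\tilde V_i^{(j)})=\{\tilde V_i^{(j+1)}\}$ one has $Odd(\tilde f(\tilde V_i^{(j)}))=N(\tilde V_i^{(j+1)})\supseteq S_{j+1}$, so every non-Pauli external neighbour assigned to $\tilde V_i^{(j+1)}$ must be ordered after $\tilde V_i^{(j)}$ --- a constraint the original flow does not supply, since neighbours of $V_i$ may well precede $V_i$ in $\prec$. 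Route (b) sidesteps this for the chain spiders because the inserted degree-2 auxiliary serves as the correction set, but the same difficulty reappears for the auxiliaries' own correction sets and for the substitution problem above. None of this looks fatal, but as written the case analysis does not close.
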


\begin{proof}
 By measuring all the splited spider $\Tilde{V}_i^{(n)}$, we could obtain all $\phi_{\Tilde{\phi}^i}$. Using $\spiderrule$ to merge the split spider back to one spider, we can obtain the effective measurement outcome for $V_i$ as $\phi_{V_i} = \sum_N \phi_{\Tilde{V}^i}$.
Therefore the split spiders $\mathbf{\Tilde{V}_i}$ has the same predecessors and successors as $V_i$ in partial order $\prec$. The partial order $\Tilde{\prec}$ for $\Tilde{G}$ can also be constructed as follows.

\begin{equation}
\begin{aligned}
  \Tilde{\prec} &= \bigcup\{(V_m,V_n)\}\cup \bigcup\{(V_i^\prime,V_k)\}\cup \bigcup\{(V_j,V_i^\prime)\},\\ &\forall (V_m,V_n) \in \prec ,V_m \neq V_i, V_n \neq V_i,\\ &\forall (V_j,V_i) \in \prec \mathrm{and} ~ \forall (V_i,V_k) \in \prec   
\end{aligned}
\end{equation}
\end{proof}

Since $V_i$ satisfies all the requirements from definition of Pauli flow \ref{def:pauli_flow}, each node $\{\Tilde{V}_i^{(0)}...\Tilde{V}_i^{(N)}\}$ also satisfies all the requirements. Therefore new graph $\Tilde{G}$ also admits a Pauli flow.

\begin{example}
 
Execution strategy on a ZX-diagram with regular edges. For a single spider in the original ZX-daigram (a), it is split into multiple spiders (b) by rule $\spiderrule$ . All split spiders were measured when executing a measurement step in the original diagram (d). Inversely apply rule $\spiderrule$ gives an equivalent effect of measuring the single spider in the original diagram (c). 

\centering
    \tikzfig{abstract/execution_strategy}
\end{example}

From the theorem \ref{fig:execution_strategy}, we define the Pauli flow for a semi-graph-like diagram.

\begin{definition}[Pauli flow on semi-graph-like diagram]\label{def:sgl-diagram}
A semi-graph-like diagram admits a Pauli flow if its reduced graph-like diagram admits a Pauli flow. 
\end{definition}

\begin{lemma}[Spider rule flow preservation]\label{lemma:spider_rule_flow}
Rewrite rule $\spiderrule$ preserves Pauli flow on semi-graph-like diagrams.
\end{lemma}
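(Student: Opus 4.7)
The plan is to reduce the claim to Definition \ref{def:sgl-diagram}, which says that a semi-graph-like diagram admits a Pauli flow exactly when its reduced graph-like diagram does. Given a semi-graph-like diagram $G_{sg}$ with Pauli flow and $G_{sg}'$ obtained from $G_{sg}$ by one application of $\spiderrule$ (in either the merging or splitting direction), it suffices to show that $G_{sg}$ and $G_{sg}'$ share the same reduced graph-like diagram in the sense of Definition \ref{definition:semi-graph-like-diagram-reduce}; their Pauli flows then automatically coincide under Definition \ref{def:sgl-diagram}.

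For the splitting direction, this essentially recapitulates Theorem \ref{fig:execution_strategy}: the new spiders introduced by the split are connected by regular edges and can be immediately remerged via $\spiderrule$, so any maximal reduction sequence on $G_{sg}'$ lands on the same graph-like diagram as the reduction of $G_{sg}$. For the merging direction, the same conclusion follows by reversibility: the fusion step passing from $G_{sg}$ to $G_{sg}'$ is simply the first step of some maximal reduction sequence, so the two semi-graph-like diagrams share a common reduced form. In both cases the common reduced graph-like diagram admits a Pauli flow by hypothesis, so $G_{sg}'$ does as well. If one prefers a more constructive argument, one can lift the explicit partial order $\Tilde{\prec}$ built in the proof of Theorem \ref{fig:execution_strategy} to the semi-graph-like setting, simply relabelling via the reduction map.

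The main obstacle I anticipate is bookkeeping rather than the flow argument itself: one must check that applying $\spiderrule$ keeps the diagram inside the semi-graph-like class, i.e.\ without producing parallel Hadamard wires or self-loops, and without disconnecting an input or output spider. The merging direction is the riskier case, since fusing two spiders that share a common Hadamard neighbour would introduce parallel Hadamard edges, violating condition (2) of Definition \ref{definition:semi-graph-like-diagram}. I would handle this by restricting the statement to those $\spiderrule$ applications that preserve the semi-graph-like conditions, which is precisely the regime in which the protocol uses the rule, or alternatively by a short case analysis on the local neighbourhood of the spiders being fused or split. Once this side condition is in place, the rest is a direct unpacking of Definitions \ref{def:sgl-diagram} and \ref{definition:semi-graph-like-diagram-reduce}.
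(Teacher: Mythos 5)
Your proposal takes essentially the same route as the paper: the paper's proof is a two-line observation that $\spiderrule$ does not change the reduced graph-like diagram, so by Definition \ref{def:sgl-diagram} the Pauli flow is unaffected. Your additional caveat about fusions that would create parallel Hadamard wires (taking the diagram outside the semi-graph-like class) is a genuine gap in the paper's own argument that you are right to flag, and restricting to $\spiderrule$ applications that preserve the semi-graph-like conditions is the correct fix.
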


\begin{proof}
Because applying $\spiderrule$ to a semi-graph-like diagram will not change its reduced graph-like diagram. From definition \ref{def:sgl-diagram}, $\spiderrule$ will not affect the flow property of the diagram. 
\end{proof}

\begin{theorem}[Phase obfuscation]
\label{fig:phase_splitting}
Given a quantum algorithm represented in graph-like ZX diagram $G=(\emptyset,E_H,V)$ with $n$ spiders, where $\emptyset$ denote the null set and $E_H$ denote the Hadamard edges. Each spider $V_i$ has phase $\alpha_i$. A graph $\Tilde{G}=(\Tilde{E}_N,\Tilde{E}_H,\Tilde{V})$ equivalent to $G$ can always be found, preserves the Pauli flow of $G$, and each individual phase $\Tilde{\alpha}_i$ is independent to $G$.
\end{theorem}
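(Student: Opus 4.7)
The plan is to construct $\tilde{G}$ by walking through the spiders of $G$ one at a time and using the spider rule \SpiderRule to split each of them, then verify (i) equivalence, (ii) Pauli flow preservation, and (iii) statistical independence of the new phases.

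For the construction, I would fix an integer $N_i \geq 2$ for each spider $V_i$ (the amount of ``padding'' allowed for that spider), then sample $\tilde\alpha_i^{(1)},\dots,\tilde\alpha_i^{(N_i-1)}$ independently and uniformly from $[0,2\pi)$, and set $\tilde\alpha_i^{(N_i)} = \alpha_i - \sum_{k=1}^{N_i-1}\tilde\alpha_i^{(k)} \pmod{2\pi}$. Replace $V_i$ by a chain of $N_i$ Z-spiders connected by regular (non-Hadamard) edges, carrying these phases in order, with the original Hadamard edges incident to $V_i$ redistributed arbitrarily among the chain (e.g.\ keeping them all on $\tilde V_i^{(1)}$). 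The resulting diagram $\tilde G=(\tilde E_N,\tilde E_H,\tilde V)$ is semi-graph-like in the sense of Definition~\ref{definition:semi-graph-like-diagram}.

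Equivalence of $\tilde G$ and $G$ is immediate from repeated application of \SpiderRule in reverse: each chain fuses back to a single Z-spider carrying phase $\sum_k \tilde\alpha_i^{(k)} = \alpha_i$ and its original Hadamard-edge incidences, recovering $G$. Pauli flow preservation is exactly Lemma~\ref{lemma:spider_rule_flow}: since $G$ admits a Pauli flow as a graph-like diagram and the construction of $\tilde G$ consists only of inverse applications of \SpiderRule, the reduced graph-like diagram of $\tilde G$ is precisely $G$, so by Definition~\ref{def:sgl-diagram} $\tilde G$ admits the same Pauli flow.

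For the independence claim, the first $N_i-1$ new phases at site $V_i$ are uniform on $[0,2\pi)$ by construction and so carry no information about $\alpha_i$. The last phase $\tilde\alpha_i^{(N_i)}$ is a shift of a uniform variable, hence also uniform on $[0,2\pi)$ and therefore independent of $\alpha_i$ marginally. Likewise, the distributions at distinct spiders are independent. Since each marginal $\tilde\alpha_i^{(k)}$ is uniform regardless of $G$, no individual phase in $\tilde G$ reveals information about $G$.

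The step I expect to be most delicate is the independence claim: strictly speaking, the joint distribution of all phases at site $V_i$ does determine $\alpha_i$ (the sum is fixed), so ``independent of $G$'' must be read as marginal independence of each individual phase, and this should be stated explicitly. The other subtle point is bookkeeping for how the Hadamard edges of $V_i$ are distributed across the new chain; the cleanest choice is to put all incident Hadamard edges on one endpoint of the chain, which makes the reduced graph-like diagram coincide with $G$ on the nose and lets Lemma~\ref{lemma:spider_rule_flow} apply without modification.
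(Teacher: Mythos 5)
Your proposal is correct and follows essentially the same route as the paper: split each spider via \SpiderRule into pieces joined by regular edges, assign uniformly random phases summing to the original, invoke the spider-rule flow-preservation lemma for the Pauli flow, and argue marginal uniformity of each individual phase for independence (the paper uses exactly two pieces per spider, $\tilde\alpha_{2i}=\alpha_i-\beta_i$ and $\tilde\alpha_{2i+1}=\beta_i$; your $N_i$-fold chain is a harmless generalization). Your explicit caveat that only \emph{marginal} independence holds, since the joint distribution at a site determines $\alpha_i$, is a point the paper leaves implicit.
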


\begin{proof}
Construct $\Tilde{G}=(\Tilde{E}_N,\Tilde{E}_H,\Tilde{V})$, with $2n$ spiders. The Hadamard edge $\Tilde{E}_H$ and regular edge $\Tilde{E}_N$ is given by  
 
\begin{equation}
    \Tilde{E}_{H} = \bigcup \{(\Tilde{V}_{2i+1},\Tilde{V}_{2j})\},~ \forall (V_i,V_j) \in E 
\end{equation}

\begin{equation}
    \Tilde{E}_{N} = \bigcup \{\Tilde{V}_{2i},\Tilde{V}_{2i+1}\},~\forall V_i\in V
\end{equation}

And the new phase $\Tilde{\alpha}_i$ is given by: 

\begin{equation}
    \Tilde{\alpha}_{2i}=\alpha_i - \beta_i 
\end{equation}

\begin{equation}
    \Tilde{\alpha}_{2i+1}=\beta_i 
\end{equation}

where $\beta_i$ is a random phase value. $\Tilde{G}$ can be rewrite to $G$ by applying the $\spiderrule$ to merge spider $\Tilde{V}_{2i}$ and $\Tilde{V}_{2i+1}$. Therefore $\Tilde{G}$ and $G$ are equivalent. Since $\Tilde{\alpha}_{2i} = \alpha_i - \beta_i + b_{2i}\pi$, $\Tilde{\alpha}_{2i+1} = \beta_i + b_{2i+1}\pi$, when $\beta_i$ is chosen uniformly random, $\Tilde{\alpha}_{2i}$ or $\Tilde{\alpha}_{2i+1}$ is independent from $\alpha_i$. Since $\Tilde{\alpha}_{2i}$ and $\Tilde{\alpha}_{2i+1}$ only dependent to $\alpha_i$ and $\beta_i$, each single of them is independent to $G$. This rewrite only uses $\spiderrule$,from lemma \ref{lemma:spider_rule_flow}, it preserves the Pauli flow.

\end{proof}

\begin{example}
To illustrate the phase obfuscation strategy, consider (a) the original graph-like ZX diagram describing the original algorithm. Each spider $V_i$ with phase $\alpha_i$ is split into two spiders connected with a regular wire. A random phase value $\beta_i$ is generated, and the phase for two new spiders $\Tilde{\alpha}_{2i} = \alpha_i - \beta_i$, $\Tilde{\alpha}_{2i+1}=\beta_i$, results in (b).

\vskip 1em
\centering

\[\scalebox{0.9}{\tikzfig{steps/phase_splitting}}\]

\end{example}

After this construction, we could show that $D(\Tilde{V}_{2i+1}) < D(\Tilde{V}_{2i})$

\subsection{Connectivity obfuscation}

Recall that in the phase obfuscation step, we have turned each spider into a pair of spiders connected with a regular wire, and two spiders in the pair now belong to different blocks. The wires within each block can be rewritten into a wire between the adjacent block. Such wire can be rewritten by disconnecting it from one spider and connecting it to the spider in an adjacent block which has a regular wire connected to the just disconnected from. 

\begin{theorem}[Internal connectivity to external connectivity]
    \label{fig:internal_wires}
 Given graph $G=(E_N,E_H,V)$ where $E_{Hi,j}=\{\Tilde{V}_{2i},\Tilde{V}_{2j}\}$ is an wire connect two spiders $\Tilde{V}_{2i}$ and $\Tilde{V}_{2j}$ in the same block. $E_{Hi,j}$ can be replaced with $\{\Tilde{V}_{2i+1},\Tilde{V}_{2j}\}$ , where $\Tilde{V}_{2i}$ and $\Tilde{V}_{2i+1}$ are connected with an regular wire. The rewrite rule preserves the Pauli flow of $G$. 
\end{theorem}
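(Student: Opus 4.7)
The strategy is to realize the proposed move as a composition of two applications of the spider-fusion rule \spiderrule, so that soundness of ZX-calculus gives equivalence for free and Lemma \ref{lemma:spider_rule_flow} gives Pauli-flow preservation for free. The key local picture is that after the phase-obfuscation construction of Theorem \ref{fig:phase_splitting}, the two spiders $\Tilde{V_{2i}}$ and $\Tilde{V_{2i+1}}$ are both Z-spiders joined by a regular wire, so anything attached to one of them by external edges can be redistributed between them via fuse/unfuse, as long as the total phase $\Tilde{\alpha}_{2i}+\Tilde{\alpha}_{2i+1}=\alpha_i$ is preserved.

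Concretely, the plan is as follows. First, I would apply \spiderrule in the forward direction to merge $\Tilde{V_{2i}}$ and $\Tilde{V_{2i+1}}$ into a single Z-spider $W$ carrying the combined phase. In the resulting intermediate semi-graph-like diagram, the Hadamard edge $E_{Hi,j}$ that was incident to $\Tilde{V_{2i}}$ is now incident to $W$, and all other regular edges previously attached to $\Tilde{V_{2i}}$ or to $\Tilde{V_{2i+1}}$ also hang off $W$. Second, I would apply \spiderrule in the reverse direction to unfuse $W$ back into two Z-spiders connected by a regular wire, this time routing the Hadamard edge to the spider that plays the role of $\Tilde{V_{2i+1}}$ (and keeping the other external regular edges on the $\Tilde{V_{2i}}$-side spider), with the individual phases $\Tilde{\alpha}_{2i},\Tilde{\alpha}_{2i+1}$ restored to their original values. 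The composite rewrite therefore takes the original diagram exactly to the claimed one, establishing equivalence by soundness of \spiderrule.

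For Pauli-flow preservation, the rewrite consists solely of applications of \spiderrule on diagrams that remain semi-graph-like throughout (all spiders are Z-spiders, no self-loops are created, and no new parallel Hadamard wires appear because $\Tilde{V_{2i+1}}$ was, in the generic case of interest, not previously Hadamard-connected to $\Tilde{V_{2j}}$). Hence Lemma \ref{lemma:spider_rule_flow} applies at each step and the Pauli flow of $G$ is preserved. The main subtlety I expect to have to check carefully is precisely this semi-graph-like bookkeeping at the intermediate fused stage, and in particular the parallel-Hadamard-wire condition: if $\Tilde{V_{2i+1}}$ already shared a Hadamard edge with $\Tilde{V_{2j}}$, the final diagram would not be semi-graph-like and one would need a further cleanup via the Hopf/antipode rule \hopfrule before Lemma \ref{lemma:spider_rule_flow} could be invoked. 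Everything else is routine once the fuse/unfuse picture is drawn out.
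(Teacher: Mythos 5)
Your proposal is correct and matches the paper's own proof, which simply exhibits the rewrite as a figure, notes that it uses only \SpiderRule (i.e., the fuse/unfuse of $\Tilde{V}_{2i}$ and $\Tilde{V}_{2i+1}$ along their regular wire with the Hadamard edge re-routed to $\Tilde{V}_{2i+1}$), and invokes Lemma \ref{lemma:spider_rule_flow} for Pauli-flow preservation. Your additional remark about the parallel-Hadamard-wire edge case is a reasonable piece of bookkeeping that the paper does not address, but it does not change the argument.
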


\begin{proof}
The graph can be constructed with the following rewrite. The rewrite uses only $\spiderrule$, from lemma \ref{lemma:spider_rule_flow} it preserves Pauli flow.
    \centering{
    \tikzfig{abstract/remove_internal_edges}}
\end{proof}

\begin{theorem}\label{fig:double_extension}\cite{arxiv_2205.02009,arxiv_2304.08166}
Let $G = (V, E)$ be a graph with vertices $V$ and edges $E$. Suppose the labelled open graph $(G, I, O)$, and $\lambda(u) \in \{XY, X\}$ for all $u \in O^c$
, has Pauli flow. Pick an edge ${v, w} \in E$ and subdivide it twice, i.e. let $G^\prime := (V^\prime,E^\prime)$ where $V^\prime := V \cup {v^\prime,w^\prime}$ contains two new vertices $v^\prime$, $w^\prime$, and

\begin{equation}
    E^\prime = (E\slash \{\{v,w\}\})\cup \{\{v,w^\prime\},\{v^\prime,w^\prime\},\{v^\prime,w\}\}.
\end{equation}

Then $(G^\prime,I,O,\lambda^\prime)$ has Pauli flow, where

\begin{equation}
    \lambda^\prime(u) := 
\begin{cases}
    \lambda(u),& \text{if } u\in V \slash O\\
    X,         & \text{if } u\in \{v^\prime,w^\prime\}
\end{cases}
\end{equation}
    \centering
    \tikzfig{proof/double_extension} 
\end{theorem}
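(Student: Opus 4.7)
The plan is to construct an explicit Pauli flow $(f', \prec')$ on $(G', I, O, \lambda')$ from the given Pauli flow $(f, \prec)$ on $(G, I, O, \lambda)$. The intuition is that inserting two X-basis measurements in the middle of a wire is flow-preserving, so most of $(f, \prec)$ can be reused, with only a local adjustment near the pair $(v, w)$ and a small insertion for the new vertices $v', w'$.

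First I would record the four changed neighbourhoods in $G'$: $N_{G'}(v) = (N_G(v)\setminus\{w\})\cup\{w'\}$, $N_{G'}(w) = (N_G(w)\setminus\{v\})\cup\{v'\}$, $N_{G'}(v') = \{w, w'\}$, $N_{G'}(w') = \{v, v'\}$, with all other neighbourhoods unchanged. A short parity calculation then establishes the technical heart of the argument, an invariance property: for every $S \subseteq V$, defining $S' := S \cup \{w' \mid w \in S\} \cup \{v' \mid v \in S\}$, one has $\mathrm{Odd}_{G'}(S') \cap V = \mathrm{Odd}_G(S)$ and $\mathrm{Odd}_{G'}(S') \cap \{v', w'\} = \emptyset$. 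In words, augmenting $S$ by $w'$ exactly when $w \in S$ and by $v'$ exactly when $v \in S$ compensates for the neighbourhood changes at $v$ and $w$, while keeping the two new vertices out of the odd set.

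Then I would define $f'(v') := \{w'\}$ and $f'(w') := \{v'\}$, and for each original $u \in O^c$ take $f'(u) := f(u) \cup \{w' \mid w \in f(u)\} \cup \{v' \mid v \in f(u)\}$. The partial order $\prec'$ would extend $\prec$ by placing $v'$ strictly before $v$ and $w'$ strictly before $w$ (for instance, both at the very bottom of the order). This placement is consistent: no original ordering is broken, and no clause forces $v'$ or $w'$ to be placed later, because the augmentations are X-labelled (so clause~1 is not triggered) and the invariance property keeps $v', w'$ out of every augmented odd set (so clause~2 is not triggered at other vertices on their account).

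Finally I would verify Definition~\ref{def:pauli_flow} clause by clause. Since $\lambda'$ takes values only in $\{XY, X\}$, clauses~3, 5, 6, 8, 9 are vacuous. For original $u \in O^c$, clauses~2, 4, and~7 follow from the invariance property together with the corresponding clauses for $(f, \prec)$, and clause~1 survives because the only new entries in $f'(u)$ are X-labelled. At $u \in \{v', w'\}$ a direct computation gives $\mathrm{Odd}_{G'}(\{w'\}) = \{v, v'\}$ and $\mathrm{Odd}_{G'}(\{v'\}) = \{w, w'\}$, which immediately yields clause~7 for the X-labels, and the placement $v' \prec' v$, $w' \prec' w$ discharges clause~2. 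The main obstacle is keeping the case analysis clean when $u \in \{v, w\}$, i.e.\ when the vertex whose correction set is being augmented is itself an endpoint of the subdivided edge, but by stating the invariance property for arbitrary $u \in V$ this case is absorbed automatically, and nothing beyond elementary parity counting is required.
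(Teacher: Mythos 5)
Your construction is correct, and it is worth noting that the paper itself contains no proof of this statement: the theorem is stated bare with a footnote deferring the argument to unpublished work of Backens, so your proposal genuinely fills the gap rather than diverging from an existing route. The parity computation underlying your invariance property checks out. Since the subdivision routes the edge as $v$--$w'$--$v'$--$w$, you correctly get $N_{G'}(v')=\{w,w'\}$ and $N_{G'}(w')=\{v,v'\}$; augmenting $S$ by $w'$ exactly when $w\in S$ (and by $v'$ exactly when $v\in S$) restores every original parity at $v$ and $w$ while forcing $|N_{G'}(v')\cap S'|$ and $|N_{G'}(w')\cap S'|$ to be even, and the local facts $Odd_{G'}(\{w'\})=\{v,v'\}$, $Odd_{G'}(\{v'\})=\{w,w'\}$ give clause 7 for the two new $X$-vertices. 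Placing $v',w'$ at the bottom of the order with $v'\prec' v$ and $w'\prec' w$ is consistent precisely because the invariance keeps the new vertices out of every augmented odd set, so no clause ever demands that an original vertex precede them; and the case $u\in\{v,w\}$ is indeed absorbed by stating the invariance for arbitrary $S$. One caveat you should make explicit: you are reading clause 2 of Definition \ref{def:pauli_flow} with the standard guard $\neg(u\prec v)$, i.e.\ as an ordering requirement dischargeable by choosing $\prec'$; as literally transcribed in the paper that guard is missing and the clause would be unsatisfiable, so your reading is the intended one, but the discrepancy deserves a remark. Beyond that, the hypothesis $\lambda(u)\in\{XY,X\}$ is doing real work in your argument (it makes clauses 3, 5, 6, 8, 9 vacuous and lets the $X$-labelled augmentations of $f'(u)$ escape clause 1), which matches the restriction in the theorem statement.
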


\begin{theorem}[Obfuscate the connectivity between blocks]\label{fig:seperate_wires}
Given a semi-graph ZX diagram $G(E_N, E_H, V)$ generated from phase obfuscation, an equivalent, Pauli flow preserving semi-graph ZX-daigram $\Tilde{G}(\Tilde{E}_N,\Tilde{E}_H,\Tilde{V})$ and a block partition $B$ can be found, such that for edges $\Tilde{E}=(\Tilde{V}_i,\Tilde{V}_j)$ within the same partition $\Tilde{V}_i\in B_k$,$\Tilde{V}_j \in B_k$, all edges depends only on $deg(V_i)$, $V_i\in V$.
\end{theorem}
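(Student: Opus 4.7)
The plan is to construct $\tilde G$ from $G$ by a two-stage rewrite using the tools already established. Recall that the input $G$ after phase obfuscation has the special form described by Theorem~\ref{fig:phase_splitting}: every original vertex $V_i$ has become a pair $(\tilde V_{2i}, \tilde V_{2i+1})$ joined by a regular wire, and every original edge $(V_i, V_j) \in E$ has become a single Hadamard edge between $\tilde V_{2i+1}$ and $\tilde V_{2j}$. Because pairs lie in adjacent depth-blocks, some Hadamard edges already cross block boundaries, but others (coming from two original vertices that share a depth layer) may lie entirely inside one block.

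First I would eliminate any such intra-block Hadamard edges. For each Hadamard edge with both endpoints in the same block $B_k$, invoke Theorem~\ref{fig:internal_wires} to replace it by an inter-block Hadamard edge plus one new regular wire internal to $B_k$; Pauli flow is preserved by Lemma~\ref{lemma:spider_rule_flow}. After this pass every Hadamard edge of the working diagram is inter-block, and the only intra-block structure is a collection of regular wires.

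Second, I would invoke the double-extension rewrite of Theorem~\ref{fig:double_extension} on each remaining inter-block Hadamard edge. Each such edge becomes a path of length three, contributing two new $X$-labelled spiders. I would assign one of these auxiliaries to the block of $\tilde V_{2i+1}$ and the other to the block of $\tilde V_{2j}$, joining each auxiliary to its parent split-spider by a regular wire. After doing this for every inter-block Hadamard edge, the picture inside any block $B_k$ is a disjoint union of \emph{fans} centred at each split-spider: the fan rooted at $\tilde V_{2i+1}$ has exactly $\deg(V_i)$ regular-wire leaves, and similarly for $\tilde V_{2j}$. The Hadamard edges themselves now connect auxiliary spiders sitting in adjacent blocks and are invisible from inside a single block.

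The main obstacle is the combinatorial bookkeeping: one must verify that the two inserted spiders of each double-extension can consistently be placed in the two adjacent depth-blocks inherited from the phase-obfuscation partition, and that after doing so the within-block adjacency data is a function only of the multiset $\{\deg(V_i) : V_i \in V\}$, with no residual information about which external partner each fan leaf ultimately joins. Once this placement rule is fixed and shown to be well-defined, equivalence $\tilde G \equiv G$ follows from soundness of the ZX rewrites used, and Pauli-flow preservation follows by composing Lemma~\ref{lemma:spider_rule_flow} with Theorem~\ref{fig:double_extension}, completing the argument.
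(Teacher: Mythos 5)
Your construction matches the paper's proof: first apply Theorem~\ref{fig:internal_wires} to push intra-block Hadamard edges across block boundaries, then subdivide every inter-block Hadamard edge via Theorem~\ref{fig:double_extension}, assign the two new spiders to the blocks of the respective endpoints, and observe that the surviving within-block edges reveal only the degree of each hub spider, with Pauli-flow preservation obtained exactly as you say by composing Lemma~\ref{lemma:spider_rule_flow} with the two cited theorems. One minor slip: the edges created by the double extension are Hadamard edges (the new vertices are $X$-labelled subdivision spiders of a Hadamard wire), not regular wires as you describe the fan leaves, but this does not affect the degree-only conclusion.
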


\begin{proof}
To find $\Tilde{G}$, we first apply the rewrite rule from theorem \ref{fig:internal_wires} and remove wires within the blocks. Then for all $E_{i,j}=\{V_i,V_j\} \in E_H$, we apply rewrite rules from \ref{fig:double_extension}. Denote the newly added spider for each edge $E_{i,j}$ as $W_{i,j}$ and $W^\prime_{i,j}$. Assign $W_{i,j}$ to block $B(V_i)$ and $W^\prime_{i,j}$ to $B(V_j)$.

After this rewrite, all the edges within the blocks connect to an extra spider $W_{i,j}$ and $W^\prime_{i,j}$. For any two spiders $W_A$ and $W_B$ connect to $V_i$, they can be only distinguished by the external connection. Therefore all the edges $(V_i,W_{i,j})$ and $(V_j,W^\prime_{i,j})$ within the same block doesn't depends on $E_H$. However since multiple $W$ spiders may still connect to a single $V$ spider, the internal wires depend on the degree of spider $V_i$ and $V_j$. 

Because the rewrite rule used in obfuscation is from theorem \ref{fig:internal_wires} and  \ref{fig:double_extension} and they both preserve Pauli flow, the rewrite for obfuscation also preserves Pauli flow.
\end{proof}

\begin{example}\label{example:pivot_grow_spider}
The process of connectivity obfuscation between blocks can be illustrated as follows. Suppose a fragment of the quantum algorithm looks like (a). Here all the phases in the spiders are not included in the diagram. First extra spiders are created with rule $\spiderrule$ to ensure no two inter-block wires are connected to the same spider, see (b). Then extra dummy spiders are created to obfuscate the spiders' degree (wire connected to the same spider). Each dummy spider has only two wires and would connect another dummy spider or a hub spider. See (c). Now for each inter-block wire, a random id $s_{i,k}$ is generated for its identification, see (d). The spiders' order can be randomly shuffled in each block, as long as the connectivity remains the same. See (e). Two adjacent blocks are assigned to a different quantum agent. Each agent only needs the wire identity $s_{i,k}$ to establish the correct entanglement. 
\vskip 1em
\centering
\tikzfig{abstract/seperate_edges}
\end{example}
  
The connectivity obfuscation restricts each agent to have only the label of edges $s_{i,j}$ instead of the actual qubit connected in the adjacent agents. Such obfuscation prevents the connectivity configuration of the ZX\-diagram from being reconstructed. After the obfuscation, the leg connects to hub spiders are all in an equivalent position; therefore agent cannot distinguish the direction of information flow during the execution. In practice, the $s_{i,j}$ can be used to identify the pre-shared bell pairs between the agents. Each agent would not be able to know the entanglement structure of other agents.

The connectivity obfuscation step hides the agent's other end of the entanglement. However, the actual required entanglement can be estimated by the agent by counting the number of entanglements within its block, connected to some \textit{hub spiders}. To further obfuscate the resource requirement of the quantum algorithm, we need to modify the number of connectivity of these hub spiders. We could add dummy qubit resources to obfuscate the exact resource requirement of the quantum algorithm. This can be done by attaching two phase-free spiders to the existing graph and connecting them with Hadamard edges.

\begin{theorem}\label{fig:appendage_extension}\cite{arxiv_2205.02009,arxiv_2304.08166}
Let $G = (V, E)$ be a graph with vertices $V$ and edges $E$. Suppose the labelled open graph $(G, I, O)$, and $\lambda(u) \in \{XY, X\}$ for all $u \in O^c$
, has Pauli flow. Pick a node ${u} \in E$ and append two new vertices connected by a Hadamad edge, i.e. let $G^\prime := (V^\prime, E^\prime)$ where $V^\prime := V \cup {v,w}$ contains two new vertices $v^\prime$, $w^\prime$, and

\begin{equation}
    E^\prime = E\cup \{\{v,w\},\{w,w^\prime\}\}.
\end{equation}

Then $(G^\prime,I,O,\lambda^\prime)$ has Pauli flow, where

\begin{equation}
    \lambda^\prime(u) := 
\begin{cases}
    \lambda(u),& \text{if } u\in V \slash O\\
    X,         & \text{if } u\in \{w,w^\prime\}
\end{cases}
\end{equation}
    \centering
    \tikzfig{proof/appendage_extension} 
\end{theorem}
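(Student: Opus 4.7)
The plan is to exhibit a Pauli flow $(f',\prec')$ on $G'$ that extends $(f,\prec)$. On all original non-output vertices set $f'(v):=f(v)$, and on the two new vertices define
\[
f'(w):=\{w'\},\qquad f'(w'):=\{w\}.
\]
These are the minimal choices compatible with the new vertices being measured in the $X$ basis, since in $G'$ the neighbourhoods are $N_{G'}(w)=\{u,w'\}$ and $N_{G'}(w')=\{w\}$, so $w\in Odd_{G'}(\{w'\})$ and $w'\in Odd_{G'}(\{w\})$, which is exactly condition~7 of Definition~\ref{def:pauli_flow} at $w$ and $w'$.

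For the partial order I would add $w'$ as a new minimum and $w$ as a new maximum, leaving $\prec$ unchanged on $V$. The choice is dictated by condition~2. At $w'$ we have $Odd_{G'}(f'(w'))=Odd_{G'}(\{w\})=\{u,w'\}$; since $\lambda'(u)\in\{XY,X\}$ falls outside $\{Y,Z\}$, condition~2 requires $w'\prec' u$, which is granted by $w'$ being minimal. At $w$ we have $Odd_{G'}(\{w'\})=\{w\}$ and no further constraints arise. For an original vertex $v\in O^c$, $f'(v)=f(v)\subseteq V$, so $Odd_{G'}(f(v))$ can differ from $Odd_{G}(f(v))$ only at the newly added vertices $w,w'$. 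Since $w'$ has only $w$ as a neighbour and $w\notin f(v)$, one has $w'\notin Odd_{G'}(f(v))$. On the other hand $w\in Odd_{G'}(f(v))$ iff $u\in f(v)$, because $N_{G'}(w)=\{u,w'\}$ with $w'\notin f(v)$. When this occurs, condition~2 demands $v\prec' w$, which holds because $w$ is maximal in $\prec'$.

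What remains is routine. Conditions~1 and 3--9 at original vertices are unaffected, since $f'$ and the labels agree with their originals and the new edges only alter $N_{G'}(u)$ by the vertex $w$, which does not lie in $f(v)$ for any $v$. Conditions~1, 2, and 7 at the new vertices were verified above. The main subtlety is exactly the point about $Odd_{G'}(f(v))$: appending $w$ adjacent to $u$ injects $w$ into the odd neighbourhood of every correction set that contains $u$, and this is what forces the asymmetric placement of $w$ and $w'$ in the new partial order. The construction closes only because the new vertices are measured in the $X$ basis, so condition~1 places no ordering constraints when they appear in another vertex's flow set, and condition~2 can always be satisfied by pushing $w$ to the top of $\prec'$.
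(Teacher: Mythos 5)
Your construction is correct, and since the paper states Theorem~\ref{fig:appendage_extension} without any proof (it defers to unpublished work and gives only a figure), there is no in-text argument to compare against --- your write-up effectively supplies the missing proof. The choices $f'(w)=\{w'\}$, $f'(w')=\{w\}$, with $w'$ a global minimum and $w$ a global maximum of $\prec'$, do verify all the conditions of Definition~\ref{def:pauli_flow}: the only non-vacuous new checks are condition~7 at the two $X$-labelled vertices (which your correction sets satisfy, since $Odd_{G'}(\{w'\})=\{w\}$ and $Odd_{G'}(\{w\})=\{u,w'\}$), condition~2 at $w'$ (forcing $w'\prec' u$) and at any original $v$ with $u\in f(v)$ (forcing $v\prec' w$), and you identify exactly these. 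The load-bearing observation --- that for $K\subseteq V$ one has $Odd_{G'}(K)=Odd_G(K)\cup\{w\,:\,u\in K\}$ while $w'\notin Odd_{G'}(K)$, so the pendant path perturbs odd neighbourhoods only by possibly injecting $w$, and no original vertex's membership in $f(v)$ or $Odd(f(v))$ changes --- is stated and used correctly, and it is what lets conditions 1 and 3--9 at original vertices carry over verbatim. Two caveats worth making explicit if this is to be inserted as a proof: first, you silently repaired the paper's transcription of condition~2 of Definition~\ref{def:pauli_flow}, which as printed omits the qualifier $\neg(u\prec v)$ and would be unsatisfiable for any nontrivial flow; your reading (if $v\in Odd(f(a))$, $v\neq a$, and $\lambda(v)\notin\{Y,Z\}$, then $a\prec v$) is the standard one and the one the argument needs. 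Second, the theorem statement itself conflates $v,w$ with $v',w'$ and writes $u\in E$; your disambiguation as the pendant path $u$--$w$--$w'$ with both new vertices labelled $X$ is the intended reading, consistent with the displayed $\lambda'$ and the figure.
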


\begin{theorem}[Dummy resources]\label{fig:extra_resources}

Given a ZX-diagram $G(E_N, E_H, V)$ and partition $B$ generated from connectivity obfuscation. An equivalent, Pauli flow preserved graph $\Tilde{G}(\Tilde{E}_N,\Tilde{E}_H,\Tilde{V})$ and a partition $\Tilde{B}$ can be found, such that the hub spider has a larger degree.
\end{theorem}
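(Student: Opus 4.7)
The plan is to obtain $\Tilde{G}$ from $G$ by iterating the appendage construction of Theorem~\ref{fig:appendage_extension} at every hub spider until all hubs have the desired inflated degree. Theorem~\ref{fig:appendage_extension} already packages together the two nontrivial facts we need: attaching a fresh pair of phase-free vertices linked by a Hadamard edge leaves the diagram semantics unchanged (it is a Hadamard-identity appendage that can be removed by \idrule/\hcancelrule), and Pauli flow is preserved. The present theorem is then essentially a bookkeeping exercise on top of that lemma.

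First I would fix a common target degree $D$ for hub spiders, say the maximum hub degree in $G$ or some larger value specified by the client, so that after padding every hub carries the same degree and the true resource count of the algorithm is masked. For each hub $u$ of current degree $d(u) < D$, I would invoke Theorem~\ref{fig:appendage_extension} on $u$ exactly $D - d(u)$ times. Each invocation raises $d(u)$ by one while preserving both semantics and flow, so after all the padding is completed the resulting diagram $\Tilde{G} = (\Tilde{E}_N, \Tilde{E}_H, \Tilde{V})$ is semantically equal to $G$ and has Pauli flow.

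Second I would extend the partition to a new partition $\Tilde{B}$. Each freshly created pair of vertices $(v, w)$ sits as a short Hadamard-chain appendage off of $u$. To keep the obfuscation uniform, the natural choice is to mimic the block-crossing pattern established in Theorem~\ref{fig:seperate_wires}: place the vertex adjacent to $u$ in $B(u)$ and the remaining one in the adjacent block, and then assign the new leg a random identifier $s_{i,j}$ of the same form used in Example~\ref{example:pivot_grow_spider}. From an agent's local view, a dummy appendage presents exactly the same incidence data (a hub leg labelled by some $s_{i,j}$) as a genuine inter-block wire.

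The main obstacle is not the rewrite itself, since Theorem~\ref{fig:appendage_extension} handles the semantic equivalence and the Pauli-flow preservation in one go, but rather the bookkeeping that justifies calling the result useful. Specifically, one has to check that the added appendages respect the earlier invariants coming from the phase and connectivity obfuscation steps, namely that no new parallel Hadamard wires or self-loops are introduced and that the partition $\Tilde{B}$ still conceals the original connectivity. Both points are immediate from how the construction is carried out: each appendage attaches to a hub via a fresh vertex, so no existing edge is duplicated, and the block assignment extends the one used in Theorem~\ref{fig:seperate_wires} in a type-consistent way. Putting these observations together yields $\Tilde{G}$ and $\Tilde{B}$ with all the properties claimed.
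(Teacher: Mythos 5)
Your proposal is correct and follows essentially the same route as the paper: both reduce the statement to Theorem~\ref{fig:appendage_extension}, which supplies semantic equivalence and Pauli-flow preservation in one step, and then extend the block partition to cover the two freshly added vertices. The only differences are cosmetic --- the paper assigns the new pair $w$, $w'$ to blocks $B_{j-1}$ and $B_{j-2}$ rather than to $B(u)$ and an adjacent block, and states the construction for a single hub spider rather than iterating to a uniform target degree $D$ --- neither of which affects the claim as stated.
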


\begin{proof}
\vskip 1em
Suppose we want to increase the degree of spider $V_i\in B_j$. Apply the rewrite rule from theorem \ref{fig:appendage_extension} to the spider $V_i$, and we have two added spider $w$ and $w^\prime$. Assign $w$ to $B_{j-1}$ and $w^\prime$ to $B_{j-2}$. The rewrite graph is equivalent to the original graph, preserves the Pauli flow, and increases the degree of $V_i$.

\end{proof}

\subsection{Measurement result obfuscation}

\SXE{So far, we have made input information to each agent independent of the algorithm. However, each agent may obtain information from their measurement outcome. Here we show some information may leak out from the measurement distribution of intermediate blocks, if the measurement outcome is not further obfuscated. To illustrate it, first consider a circuit is folded into two piece and each part of the ZX-Diagram is executed on a different agent. See Example \ref{fig:circuit_bending_abstract}.}

\begin{example}\label{fig:circuit_bending_abstract}
\SXE{Fold a quantum circuit. On (a) we rewrite our quantum circuit into a ZX-diagram and then divide them into two quantum circuits noted as $U_1$ and $U_2$. Then we fold the ZX-diagram in (b) and rewrite the folded connection between $U_1$ and $U_2$ in (c). Eventually, we add the initial state of the quantum circuit in (d). The ZX-diagram in (d) is ready to be extracted into a quantum circuit with shallower depth but used twice as the original quantum circuit.}

    \centering
    \tikzfig{abstract/circuit_bending} 
\end{example}

\SXE{Now we would like to understand the measurement result distribution of the agent executing the upper half of the diagram. }

\begin{example}\label{fig:circuit_bending_overhead}
\SXE{The reduced density matrix of a folded circuit. The reduced density matrix can be expressed by adding a dual of the existing quantum circuit and connecting the qubits that need to be traced away. Here we can show that after tracing away the qubits containing the computation result, the ZX-diagram of the reduced density matrix is an identity. This identity indicates that the measurement distribution of these qubits is uniform.}

    \centering

\[\scalebox{0.9}{\tikzfig{abstract/circuit_bending_overhead}}\]
\end{example}

\SXE{The graph-like ZX-diagram can be considered folding the circuit until it has only one operation before it gets measured. We can apply the same analysis to our protocol.}

\begin{example}\label{fig:recovery_leackage}

\SXE{Information leakage from correction. Here we show that there could be information leakage from the agent's observation when correction is applied. To understand each agent's measurement distribution, we first arrange the spiders into the same column. Then we generate its conjugate diagram next to the existing diagram. The observed distribution described by the reduced density matrix can be obtained by tracing away the unmeasured qubits. However, we always apply the correction to restore the quantum state for measured qubits. Therefore it is equivalent to measuring the zero-phase spider on these qubits. The reduced density matrix is shown in (a). Now we try to simplify the quantum circuit; most of the unmeasured qubits can be traced away. However, it still leaves a graph that is not necessarily identity—shown in (b). The non-identity graph indicates the agent can observe a non-uniformed distribution, which may carry useful information.}

\centering
    \tikzfig{abstract/recovery_leakage}
\end{example}

\SXE{The observed distribution of each agent is characterized by the reduced density matrix. A non-uniform distribution indicates a potential information leakage. Now we introduce the measurement result obfuscation approach to resolve the information leakage from the non-uniform measurement outcome distribution.}

\begin{theorem}[Measurement result obfuscation]\label{fig:measurement_obfuscation}
 Given ZX-Diagram $G(E_N, E_H, V)$ executes with MBQC methods. For each spider $V_i\in V$ with phase $\alpha_i$, generate a random bit $b_i$ and define $\Tilde{\alpha}_i = \alpha_i + b_i \pi$, the measured distribution is independent of the diagram and the distribution for the calculation result can be reconstructed classically by the client.
\end{theorem}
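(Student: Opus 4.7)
The plan is to decompose the claim into two parts: (i) the outcome bits observed and reported by the agents are uniformly distributed on $\{0,1\}^n$ and so carry no information about the diagram, and (ii) the client can classically invert the randomization and recover the intended output distribution.

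For (i), the central diagrammatic fact is that, for a Z-spider measured in the $XY$-plane, replacing its phase $\alpha_i$ by $\alpha_i + \pi$ swaps the projectors onto outcomes $0$ and $1$, so that $\tilde m_i = m_i \oplus b_i$. I would justify this either by a one-line expansion of the measurement effect at angle $\phi$, or by pulling the extra $\pi$ into the measurement cap via \pirule. Given this, since the $b_i$ are i.i.d.\ uniform bits chosen by the client and never communicated to the agents, each marginal satisfies $P(\tilde m_i = 0) = P(\tilde m_i = 1) = 1/2$ regardless of the underlying quantum state, and the joint distribution of $(\tilde m_1, \ldots, \tilde m_n)$ is uniform on $\{0,1\}^n$. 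Independence from $G$ follows.

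For (ii), the client retains every $b_i$; on receiving each $\tilde m_i$ she reconstructs the true outcome $m_i = \tilde m_i \oplus b_i$. By Theorem~\ref{fig:phase_splitting} the obfuscated diagram is equivalent to $G$, and the Pauli flow of $G$ guarantees that the standard MBQC correction scheme, fed the true $m_i$, produces the correct output distribution. For any output spider, the added $b_i\pi$ acts as a known $Z^{b_i}$ on the output and is likewise stripped off classically during readout.

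The main obstacle is the adaptive nature of MBQC: later measurement angles depend on earlier outcomes through flow corrections, so one must check that each randomization propagates consistently through the protocol. The resolution is that the client commits to the $b_i$ in advance and delivers only the obfuscated angles $\tilde\alpha_i$; any cumulative Pauli produced by running corrections with the obfuscated outcomes commutes out to the end of the computation as a deterministic, client-known operator that she removes in post-processing. Verifying that this bookkeeping really yields the same final distribution as the unobfuscated protocol is the step that requires the most care.
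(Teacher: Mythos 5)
Your proposal is correct and follows essentially the same route as the paper's (much terser) proof: the relation $\tilde r_i = r_i \oplus b_i$ with $b_i$ uniform makes the observed outcomes a one-time pad, hence uniform and independent of $G$, while the client XORs the $b_i$ back to recover the true results. The adaptivity concern you raise at the end is not resolved inside the paper's proof of this theorem either; the paper defers it to the separate theorem on the safety of the correction process (Theorem~\ref{theorem:state_recovery_safety}), so your extra bookkeeping discussion is a welcome addition rather than a deviation.
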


\begin{proof}
 Suppose for each shot a new diagram $\Tilde{G}(\Tilde{E}_N,\Tilde{E}_H,\Tilde{V})$ is constructed at the execution time. Consider the measurement outcome for measuring spider $\Tilde{V}_i$ is $\Tilde{r}_i$. The result of executing $\Tilde{G}$  is equivalent to $G$ when we consider $r_i = \Tilde{r}_i \oplus b_i$. Since $b_i$ is chosen randomly, the distribution of $\Tilde{r}_i$ is random and independent to the diagram $G$. 
\end{proof}

\begin{example}
 To illustrate the obfuscation of the readout distribution for each qubit, we introduce a bit string $b_{i}$. Suppose a fragment of the ZX diagram is shown as (a). For each $V_i$, we add two connected spiders with phase $b_i\pi$. This is equivalent to adding $2\pi b_i$ to each $V_i$ as (b). Then one of the spiders is removed and converted into a classical flip operation. Merge the other spiders, and we have a new diagram as (c). For each $\Tilde{V}_i$, we add a $\pi$ phase if $b_i$ is 1, otherwise, keep it the same.
\vskip 1em

\centering
\tikzfig{abstract/measurement_obfuscation}
\end{example}

To understand this more easily, consider a quantum circuit that generates a binary distribution. Such distribution can be hidden by randomly applying a $\pi$ rotation to the qubit, swapping the probability of $\ket{0}$ and $\ket{1}$ just before the measurement. Then the original distribution can be restored by classically swapping them back.

Suppose the distribution without measurement obfuscation is $\Tilde{r_j}$, and the distribution after measurement obfuscation is $r_j$. The extra $\pi$ phase swaps the distribution of measuring the qubit with $\pi$ phase or zero phases. Therefore $r_j = \Tilde{r_j} \oplus b_j$, where $\oplus$ denote the bit-wise exclusive or operation. If $b_{i}$ is chosen uniformly random, the measured result would be uniform. The correction process would be intuitive: $\Tilde{r_j} = r_j \oplus b_j$. If we have measured a $\pi$ phase and have already added a $\pi$ phase to the spider, it cancels out if we have measured a zero phase and have added a $\pi$ phase to the spider, it is equivalent to measuring a $\pi$ phase without modifying the phase of the spider. 

\begin{theorem}[Measurement result independence]\label{theorem:measurement_result_independence}
The distribution of measurement results $r_i$ is independent of the executed quantum algorithm.
\end{theorem}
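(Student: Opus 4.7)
The plan is to reduce the theorem to Theorem~\ref{fig:measurement_obfuscation} via a one-time-pad argument. That earlier theorem tells us that once the client adds an independent uniformly random bit $b_i\in\{0,1\}$ to every spider phase, the outcome $\tilde r_i$ that the agent observes satisfies $\tilde r_i = r_i \oplus b_i$, where $r_i$ is the outcome the un-obfuscated diagram would have produced. Since XOR with a fresh uniform bit destroys all information about its other operand, each $\tilde r_i$ should marginally be uniform on $\{0,1\}$, and collectively the $\tilde r_i$ should be uniform on $\{0,1\}^n$, independently of the diagram $G$.

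First I would set up the probability space. Fix $G$ and a Pauli-flow execution order $\prec$; the bits $r_i$ then form a sequence of (possibly correlated, adaptive) random variables induced by the Born rule and by the MBQC correction process, while $b_1,\ldots,b_n$ are independent uniform bits chosen ahead of time by the client. I would then proceed by induction along $\prec$. For the base case, the first outcome $r_1$ has some Born distribution on $\{0,1\}$ that depends on $G$; XOR-ing it with the independent uniform bit $b_1$ gives $\tilde r_1 = r_1 \oplus b_1$, which is uniform on $\{0,1\}$ regardless of $G$. For the inductive step, condition on the agent's history $\tilde r_1,\ldots,\tilde r_{i-1}$; under this conditioning, the distribution of $r_i$ is a deterministic function of the history, of $b_1,\ldots,b_{i-1}$, and of $G$, but does not involve the fresh bit $b_i$. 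Hence $\tilde r_i = r_i \oplus b_i$ remains uniform on $\{0,1\}$ and independent of the history, which closes the induction. The joint distribution of $(\tilde r_1,\ldots,\tilde r_n)$ is therefore the uniform distribution on $\{0,1\}^n$ for every $G$, which is exactly the claimed independence.

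The main obstacle will be bookkeeping the conditional independence across the adaptive feedback loop. Because the client recovers $r_j = \tilde r_j \oplus b_j$ and uses it to recompute the next measurement angle (which does depend on $G$), one must carefully verify that this feedback never couples $b_i$ to any information that is already contained in the history. This is essentially automatic once stated correctly, since $b_i$ appears only in $\tilde r_i$ itself and was drawn independently of everything else, but the argument has to be written explicitly to avoid circularity. Once that is pinned down, no further quantum calculation is required: the statement is a classical one-time-pad observation stacked on top of the ZX-level obfuscation already established in Theorem~\ref{fig:measurement_obfuscation}.
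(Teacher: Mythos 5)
Your proposal is correct and takes essentially the same approach as the paper: a one-time-pad argument in which the observed outcome is the XOR of the un-obfuscated outcome with an independent uniform bit $b_i$, hence uniform regardless of the algorithm. The paper's proof is a one-line version of this; your induction along the flow order and the joint-uniformity statement merely make explicit the handling of the adaptive correction feedback that the paper leaves implicit (note only that your labels $r_i$ and $\tilde{r}_i$ are swapped relative to the paper's convention, where $r_i$ denotes the result the agent observes).
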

\begin{proof}
The measurement result without measurement obfuscation $\Tilde{r_j}$ can be non-uniform. The measurement result observed by each agent is $r_j = \Tilde{r}_j \oplus b_j$. With the $b_j$ chosen uniformly random, the measurement distribution of $r_j$ would be uniform. 
\end{proof}

\subsection{Circuit extraction}

So far, we have generated a ZX-diagram, which needs to be extracted into physical quantum operations. Note that the graph-like ZX-diagram contains only Hadamard wires. The regular wires between blocks come from the phase obfuscation step when each spider is split into two and connected with regular wires. So each spider is connected to a maximum of one regular wire to spiders at other agents. 

The extraction can be implemented with the following method:

\begin{theorem}\label{fig:wire_extraction}
With a given obfuscated graph $\Tilde{G}(\Tilde{E}_N,\Tilde{E}_H,\Tilde{V})$ where $size(\Tilde{E}_N)=1$ , i.e. each node would connect to multiple Hadamard edges and maximum 1 regular edge. $\Tilde{G}$ can be extracted into physical quantum operations and implemented on a quantum device. The extracted physical operation to implement an edge is independent of the edge type.  
\end{theorem}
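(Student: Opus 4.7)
The plan is to reduce both edge types—Hadamard edges in $\Tilde{E_H}$ and the at most one regular edge per spider in $\Tilde{E_N}$—to a single physical primitive, namely a pre-shared Bell pair between the two agents hosting the edge's endpoints, so that an individual agent cannot distinguish which type of edge it is implementing. Since inter-agent connections in $\Tilde{G}$ arise either as Hadamard edges (from the connectivity obfuscation step of Theorem~\ref{fig:seperate_wires}) or as the single regular edge produced by the phase-splitting step of Theorem~\ref{fig:phase_splitting}, it suffices to exhibit a common extraction recipe for both.

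First, I would recall the standard MBQC/ZX extraction of a Hadamard edge $\{V_i,V_j\}$ crossing two agents: the agents consume a Bell pair labelled $s_{i,j}$, and each agent fuses its half of the pair into its local spider by performing a single-qubit measurement in the basis dictated by its client-supplied obfuscated phase $\Tilde\alpha$. Next I would handle the regular edges using the ZX identity that a plain wire equals a Hadamard wire composed with a Hadamard node. Rewriting the regular edge in this form converts it into a Hadamard edge at the cost of absorbing a single Hadamard into one endpoint's local operation. Because each spider's measurement angle is already a uniformly random share by Theorem~\ref{fig:phase_splitting}, this absorbed Hadamard is indistinguishable from the random instruction stream the agent was going to receive anyway: the agent simply gets a modified measurement angle or plane, both information-theoretically independent of the original algorithm.

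Putting these pieces together, for every inter-agent edge the extraction prescribes the same physical routine: consume the pre-shared Bell pair $s_{i,j}$, then perform a single-qubit measurement at the client-specified angle. Neither the label nor the angle reveals the edge type, so the physical operation is independent of it. The main obstacle is verifying that folding an extra Hadamard into one endpoint's measurement does not break the Pauli flow on which the MBQC correction strategy relies; this is settled by observing that the required rewrites are only spider fusion and the colour-change rule, both of which preserve Pauli flow on semi-graph-like diagrams by Lemma~\ref{lemma:spider_rule_flow} and by the same reasoning already used in the proof of Theorem~\ref{fig:internal_wires}. Hence the extracted protocol remains deterministic up to the usual flow-based corrections, and the edge type is perfectly hidden from each agent.
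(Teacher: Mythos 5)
Your decomposition is the same one the paper uses: every edge is implemented by a CZ (a consumed Bell pair when it crosses agents), and a regular edge differs from a Hadamard edge only by one extra local Hadamard on one endpoint. Where your argument diverges---and where it has a genuine gap---is the final step, the claim that this extra Hadamard is hidden. You argue that because the endpoint's phase $\Tilde{\alpha}$ is already a uniformly random share (Theorem~\ref{fig:phase_splitting}), folding an $H$ into that endpoint is ``indistinguishable from the random instruction stream the agent was going to receive anyway.'' This does not follow. The randomness of $\Tilde{\alpha}$ is randomness of an angle \emph{within a fixed measurement plane}; composing the endpoint with a Hadamard either changes the measurement plane or shows up as an explicit physical $H$ gate in the agent's instructions. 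Either way it is a discrete, observable feature of the instruction rather than an angle, and under your recipe it occurs on exactly one endpoint of every regular edge and on no endpoint of any Hadamard edge. An agent that records which of its qubits receive the extra Hadamard therefore learns exactly which of its incident edges are regular---precisely the edge-type information the theorem requires to be hidden.

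The paper closes this hole by randomizing the extraction of Hadamard edges as well: a Hadamard edge is extracted either as a bare CZ or as a CZ followed by Hadamard gates on \emph{both} endpoints (the two local $H$'s recombine with the edge's $H$ so the implemented map is unchanged), while a regular edge is extracted as a CZ followed by a Hadamard on one randomly chosen side. With the client choosing uniformly, the marginal probability that any given qubit receives a local Hadamard is the same for both edge types, so the presence or absence of the $H$ carries no information about the edge. You need this additional randomization (or an equivalent one) for the independence claim to hold; your closing remarks on flow preservation via Lemma~\ref{lemma:spider_rule_flow} are fine but do not supply the missing piece.
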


\begin{proof}
Each edge in graph $\Tilde{G}$ can be extracted into quantum operations with the following method. Hadamard wires can be extracted into a CZ operation, or applying CZ operation first, then applying Hadamard gate on both qubits. It is shown in (a). Regular wires can be extracted into a CZ operation and then apply a Hadamard gate on only one side. By randomly choosing the method to extract the circuit, whether the Hadamard gate exists is independent of the type of wire being extracted. 

\centering
    \tikzfig{abstract/edge_extraction}
\end{proof}

\subsection{Summarize the protocol}

The information describing the quantum algorithm consists of the phases of each spider and the connectivity between spiders under the perspective of ZX-diagram. Also, the measurement outcome would cause information leakage when the correction process is applied. Our protocol provides a complete solution to obfuscate information from these three aspects. First, our protocol utilizes the same strategy as the UBQC protocol to obfuscate the phase information and the measurement outcome. The phase rotation operation is split into two and performed by different agents. Then the measurement outcome is obfuscated by randomly flipping the quantum distribution with a phase difference of $\pi$. The rotation phase evaluation happens during the execution process to update the correction into the phase in real time. Finally, for the connectivity, our proposal moves all the connectivity information that reveals the algorithm as a wire between two different agents to hide the connectivity of the diagram. Since each agent cannot access the information from its neighbouring agent, it loses track of the information on the other side of the wire. Here we present the formal description of our protocol. It contains two major components: The client's preparation step, which obfuscates and generates proper ZX-diagram blocks for each agent. Then, in the execution step, the client interacts with each agent to implement calculations and retrieve results. 

\SXE{
 Our protocol does not assume the input state is fully classical. For the case that the input state contains quantum data, it can be prepared by teleporting the quantum data to the agents, then make the teleport data into a segment of the ZX-Diagram for computations. The client can remain fully classical to handle the quantum data by relying on a trusted third party to supply the quantum data. 
}
\clearpage

\begin{protocol}{MBQC BQC without universal cluster state. }
\textit{Inputs.} 
\begin{enumerate}[nosep]
    \item $\mathcal{G}(E,V)$ : A graph-like ZX-diagram describes the quantum algorithm $\Lambda$. $E$,$V$ denote the connections and spiders in the graph. For classical data described input state, the preparation circuit is included in $\mathcal{G}$.
    \item $\rho_0$: The input state of the quantum algorithm.
    \item $A=\{A_1...A_m\}, m \geq 2$ : The available quantum agents. \SXE{The number of agents $m$ can be chosen arbitrarily, provided it is greater than or equal to 2.}
\end{enumerate} 
\textit{Definitions.}
\begin{enumerate}[nosep]
    \item $\mathcal{\Tilde{G}}(\Tilde{E},\Tilde{V})$. The processed ZX-diagram for execution.
    \item $V_i$: the i-th spider in $\mathcal{G}$.
    \item $\alpha_i$: the phase of i-th spider in $\mathcal{G}$.
    \item $\Tilde{V_j}$: the j-th spider in $\Tilde{\mathcal{G}}$.
    \item $\Tilde{\alpha_j}$: the phase of j-th spider in $\Tilde{\mathcal{G}}$.
    \item $\beta_j$ : the random value generated for phase obfuscation for $V_i$.
    \item $b_i$: A random bit for measurement obfuscation of i-th qubits.
    \item $B_k = \{\Tilde{V_j}\}$: The k-th block of $\Tilde{\mathcal{G}}$.
    \item $d_j = d(\Tilde{V_j})$: The distance of $V_j$ to its nearest output spider in $\Tilde{\mathcal{G}}$.
    \item $n = max(d_j)$: the number of fragmented blocks in  $\Tilde{\mathcal{G}}$. 
    \item $r_j$: The measurement result of $\Tilde{V_j}$.
    \item $\Tilde{r}_j$: The corresponding measurement result of $\Tilde{V_j}$ without measurement obfuscation.
\end{enumerate}
\textit{Goal.} Retrieve the measurement distribution of $\Lambda(\rho_0)$.
%
  \paragraph{Preparation.}
  \begin{enumerate}
    \item  The client split each $V_i \in V$ spider into two spiders $\Tilde{V}_{2i}$ and $\Tilde{V}_{2i+1}$ with rule $\spiderrule$, each spider has phase $\Tilde{\alpha}_{2i}$ and $\Tilde{\alpha}_{2i+1}$.  Note that $\Tilde{\alpha}_{2i}$ and $\Tilde{\alpha}_{2i+1}$ are symbols for placeholder, the actual value of will be evaluated in the later steps.
    \item The client rearrange the ZX-diagram and group spiders into $n$ blocks $B_k = {\Tilde{V_j}}$ where the distance to output spider $d_j=k$.
     \item
    For each connectivity within the same block, the client uses rule $\spiderrule$ as example \ref{fig:internal_wires} to move it to the adjacent block. 
    \item The client split each Hadamard edge between blocks into two empty spiders and three edges, as shown in example \ref{fig:seperate_wires}. The two spiders are assigned to the block that their neighbour spider belongs to.  
    \item The client analyze $\Tilde{G}$ and find a flow.
  \end{enumerate}

  \paragraph{Execution.}
  \begin{enumerate}
    \item The client assign block $B_j$ to agent $A_i$ when $j~mod~\SXE{m} = i$. \SXE{Fragment block $B_j$ are found by the rules from section \ref{sec:define_blocks}.} 
    \item For each sample
    \begin{enumerate}
      \item  The client generate random phase values $\{\beta_i\}$ and random bit $\{b_j\}$. 
      \item The client assign $\Tilde{\alpha}_{2i}=\alpha_i-\beta_i + b_{2i}\pi$ and $\Tilde{\alpha}_{2i+1}=\beta_i+b_{2i+1}\pi$.
      \item The client randomly assigns spiders to qubits and allocates resources from each agent.
      \item Agents reset all qubits in all the blocks into $\ket{+} state.$ For quantum data, teleport the input state into the input qubits and set all the other qubits into  $\ket{+}$ state.
      \item The client extracts the ZX-diagram into quantum operations with example \ref{example:pivot_grow_spider}. Then request agents to establish shared entanglement between agents based on $\{s_i,j\}$.
      \item Follows the flow of $G$ to execute the diagram. Handle $V_i$ in ascending order of the partial order of the flow. For all spiders $\Tilde{V_j}$ that splits from $V_i$
        \begin{enumerate}
           \item The client sends $\Tilde{\alpha}_j$ to the corresponding agent, requesting the agent to measure qubits in \SXE{XY plane} with angle of $-\Tilde{\alpha}_j$.
           \item The client get results $r_j$, calculate the $\Tilde{r}_j = b_j \oplus r_j$. 
           \item The client calculates the effectively measured phase $r=\sum r_j$ on $V_i$, 
           \item The client make changes to $\alpha_j$ for correction based on $r$ and the ZX-diagram with example \ref{fig:pi_propergation_measurement_sequence}.
       \end{enumerate}
    \end{enumerate}
    \item The client returns the sampled distribution of $\Tilde{r}_o$ where $\Tilde{V_o}$ is a output spider. 
  \end{enumerate}
\end{protocol}

\section{Proof of correctness and blindness\label{sec:proof} }

In this section, we go through the techniques used to protect the information and give proof of the correctness and blindness of our protocol. First, we provide the definition of blindness.

\begin{definition}[Blindness]\label{definition:blindness}
Let P be a quantum delegated computation on input $X$ and let $L(X)$ be any function
of the input. We say that a quantum delegated computation protocol is blind while leaking at most
$L(X)$ if, on client’s input $X$, for any fixed $Y = L(X)$, the following two hold when given $Y$
\begin{enumerate}
    \item The distribution of the classical information obtained by an agent in $P$ is independent of $X$.
    \item Given the distribution of classical information described in 1, the state of the quantum system obtained by an agent in $P$ is fixed and independent of $X$.
\end{enumerate}
\end{definition}

Definition \ref{definition:blindness} is proposed in \cite{Broadbent2008UniversalComputation} as a formal description to characterize blindness. Here $X$ denotes information that the agent can obtain, and $L(X)$ is any information that can be inferred from given $X$. Now that $Y=L(X)$ is given, the agent cannot infer any algorithm information if the protocol is blind. The agent has two sources of information: the instructions it receives and the measurement outcome it gets. The first source suggests that the classical instructions obtained by the agent must be independent of the algorithm, and the second source suggests that the quantum information or measurement outcome must be independent of the algorithm. 

We show the blindness of our protocol by proving the independence between the quantum algorithm being executed and the information each agent has access to.

\begin{theorem}[Inter-block connectivity independence]\label{theorem:inter_block_connectivity_independence}
Distribution of $s_{i,k}$ is independent of the executed quantum algorithm. 
\end{theorem}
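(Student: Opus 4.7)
The plan is to decompose the information carried by the labels $s_{i,k}$ into two parts and argue independence for each separately: (i) the identity of each label as a symbol drawn by the client, and (ii) the combinatorial pattern of how the labels attach to spiders inside each block. Since only these two aspects together would let an agent correlate its view with the algorithm, showing both are independent of $\mathcal{G}$ gives the result.

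First I would make explicit that, in the protocol, the client generates each $s_{i,k}$ as a fresh identifier drawn uniformly at random from a sufficiently large label set (large enough that collisions can be ignored, or formally handled by picking a permutation). This labelling step is a post-processing of the obfuscated graph: the client simply enumerates inter-block Hadamard edges produced by Theorems~\ref{fig:seperate_wires} and \ref{fig:extra_resources} and assigns to each a uniform random label. Because the labels are chosen independently of $\mathcal{G}$ by construction, the \emph{values} taken by $\{s_{i,k}\}$, marginalising over the rest of the randomness, are uniform on the label set and hence independent of the algorithm.

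Next I would address the more substantive point, namely that the \emph{pattern} with which labels attach to spiders inside a given block is also independent of $\mathcal{G}$. By Theorem~\ref{fig:seperate_wires}, after the internal-to-external wire rewrite and double subdivision, every inter-block edge terminates on a fresh hub spider, so the intra-block connectivity depends only on the degree sequence of the original spiders and not on which spider was connected to which. Applying Theorem~\ref{fig:extra_resources} further inflates each hub's degree with dummy spider resources; I would argue that the client can choose the dummy count so that every hub's effective degree in an agent's block matches a fixed upper bound $d_{\max}$ determined independently of the algorithm. Once every hub looks identical and carries only a uniformly random label, the joint distribution of $\{s_{i,k}\}$ seen by any agent factors into a product of uniforms over a structure that is a deterministic function of $d_{\max}$ and the block partition size.

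The main obstacle, and the step I would spend most care on, is the precise formulation of ``independent from the executed quantum algorithm'' in the style of Definition~\ref{definition:blindness}. I would let $X=\mathcal{G}$ and let the ``leaked'' quantity $L(X)$ be the coarse resource count (number of blocks and $d_{\max}$), which the client must publicly commit to in order to allocate Bell pairs. Conditioning on this $L(X)$, I would verify that the conditional distribution of the tuple $\{s_{i,k}\}$ factorises as a product of independent uniforms over the label set, since all remaining randomness comes from the client's fresh draws in the preparation step. Together with Theorems~\ref{fig:seperate_wires} and~\ref{fig:extra_resources}, whose guarantees I invoke as black boxes, this yields the desired independence and completes the proof.
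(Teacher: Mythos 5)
Your proposal is correct, and its essential content coincides with the paper's (very short) proof: the paper simply observes that $s_{i,k}$ is only used to identify the pre-shared entanglement pairs and that its choice is made by the client independently of the quantum algorithm — which is exactly your first paragraph. Where you diverge is that your second and third paragraphs go on to argue that the \emph{attachment pattern} of the labels inside a block is also algorithm-independent; in the paper that question is deliberately split off into the neighbouring Theorem~\ref{theorem:inner_block_connectivity_leakage} (inner-block connectivity leakage) and the dummy-resource Theorem~\ref{fig:extra_resources}, and it is \emph{not} claimed to be fully independent: the paper concedes a leakage of $\max(\deg(\tilde V_i))$ and only remarks informally that this could be hidden by allocating more resources. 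So your extra material is not wrong in spirit, but your assertion that the client can pad every hub to a fixed $d_{\max}$ ``determined independently of the algorithm'' overclaims relative to what the paper actually establishes (and would need its own argument that such padding preserves Pauli flow and correctness at acceptable cost). If you restrict your proof to its first paragraph — fresh, uniformly chosen identifiers, assigned after the obfuscation rewrites, hence distributionally independent of $\mathcal{G}$ — you have precisely the paper's argument, stated more carefully; the rest is better folded into the inner-block leakage discussion.
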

\begin{proof}
$s_{i,k}$ is only used to identify the preshared entanglement pairs; its choice is independent of the quantum algorithm. 
\end{proof}

\begin{theorem}[Inner-block connectivity leakage]\label{theorem:inner_block_connectivity_leakage}
Distribution of $\Tilde{E(V_i,V_{i^\prime})}, V_i,V_{i\prime} \in B_m$ can leak at most $max(deg(V_i))$.
\end{theorem}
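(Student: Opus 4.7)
The plan is to chain together Theorem~\ref{fig:seperate_wires}, which already reduces the inner-block edge distribution to a function of the spider degrees alone, with Theorem~\ref{fig:extra_resources}, which allows those degrees to be padded with dummy resources until they all agree. The objective is to show that after these moves an agent's inner-block view collapses down to the single number $\max_i \deg(V_i)$, so nothing more about the algorithm's connectivity can escape.

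First I would describe concretely what the inner-block edges look like once Theorems~\ref{fig:internal_wires} and~\ref{fig:double_extension} have been applied in the preparation step. Each original spider $V_i \in B_m$ retains one regular edge to its phase-obfuscation partner and acquires an additional regular edge for every neighbour it had in the original graph --- one for each of the $\deg(V_i)$ auxiliary hub spiders installed by the double subdivision. Consequently the inner-block graph is a disjoint union of ``star-shaped'' components whose sizes are exactly $\{\deg(V_i)+1 : V_i \in B_m\}$. All inter-block structure is invisible to the agent, because the other endpoints of the hub edges live in adjacent blocks and are identified only through the random labels $s_{i,j}$, which by Theorem~\ref{theorem:inter_block_connectivity_independence} are independent of the algorithm.

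Next I would invoke Theorem~\ref{fig:extra_resources} to append dummy hub spiders to each $V_i$ until its star has exactly $\max_i \deg(V_i) + 1$ leaves. By Theorem~\ref{fig:wire_extraction}, the extracted quantum operations for real hub edges and dummy hub edges are identically distributed, so an agent cannot tell which leaves are genuine. Combined with the random spider-to-qubit assignment performed by the client, this makes the inner-block graph indistinguishable from a fixed collection of identical stars of size $\max_i \deg(V_i) + 1$; the individual $\deg(V_i)$ values are no longer recoverable.

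Therefore the only feature of the algorithm that remains inferable from the inner-block edges is the common star size, namely $\max(\deg(V_i))$, which proves the claim. The main subtlety I anticipate is verifying that padding yields \emph{information-theoretic} independence from the individual degrees rather than merely some weaker computational hiding; this rests on Theorem~\ref{fig:wire_extraction} delivering genuinely identical distributions for real and dummy edges, together with the earlier results guaranteeing that the agent never observes the far endpoint of any hub edge. Once that indistinguishability is pinned down, the star-shape reduction above upgrades it to the desired bound on leakage.
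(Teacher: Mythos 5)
Your argument rests on the same core observation as the paper's: after Theorems~\ref{fig:internal_wires} and~\ref{fig:seperate_wires} the only inner-block edges are the spokes from each $\Tilde{V}_i$ to its hub spiders $W_{i,j}$, so the inner-block view is a union of stars and the only algorithm-dependent feature is the multiset of degrees, which the dummy resources of Theorem~\ref{fig:extra_resources} then pad. The paper's own proof stops essentially there --- two sentences saying the agent can recover, by reversing $\spiderrule$, a degree that is only an over-estimate of the true $\deg(\Tilde{V}_i)$ --- whereas you go further and argue that padding every star to the common size $\max_i\deg(V_i)+1$ makes the individual degrees information-theoretically unrecoverable, so that only the maximum leaks. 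That is a cleaner and strictly stronger reading of the theorem statement, and the appeal to Theorem~\ref{fig:wire_extraction} for the indistinguishability of real and dummy spokes is the right ingredient. Two caveats worth flagging: (i) the uniform padding you rely on is not actually a step of the paper's formal protocol (the Preparation phase never instructs the client to equalize degrees), so strictly you are proving the bound for a padded variant --- though the paper's own proof also silently invokes ``the extra dummy connectivity introduced,'' so this looseness is shared rather than introduced by you; (ii) the regular edge from $\Tilde{V}_{2i}$ to its phase partner $\Tilde{V}_{2i+1}$ is an \emph{inter}-block edge (the partner lives in the adjacent block by construction of the depth-based partition), so it should not be counted among the inner-block edges in your star decomposition --- this does not affect the conclusion.
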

\begin{proof}
For $E(\Tilde{V}_i,\Tilde{V_{i^\prime}})$, the agent $A_m$ can recover the $deg(\Tilde{V}_i)$ by reversely apply rule $\spiderrule$. With the extra dummy connectivity introduced, the degree recovered here is not necessarily the exact degree from the original algorithm, however, it is always greater or equal to $deg(\Tilde{V}_i)$.
\end{proof}

\begin{theorem}[Safety of the correction process]\label{theorem:state_recovery_safety}
The correction process does not leak information.
\end{theorem}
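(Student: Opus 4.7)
The plan is to reduce the claim to a statement about the conditional distribution of the phases that the client sends to each agent. By Theorem~\ref{theorem:measurement_result_independence}, the measurement outcomes $r_j$ that each agent observes are already uniform and independent of the algorithm, so the only remaining channel through which the correction process could leak information is the sequence of angles $\tilde{\alpha}_j$ transmitted after the corrections have been computed. Thus I would rephrase the target as: for every agent and every spider it handles, the angle $\tilde{\alpha}_j$ actually transmitted by the client is uniformly distributed on $[0,2\pi)$ and independent of the algorithm, even after conditioning on all measurement outcomes and all previously transmitted angles.

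Next, I would analyse how the client computes a correction. Following the mechanism described around example~\ref{fig:pi_propergation_measurement_sequence}, an unexpected outcome at $V_i$ is propagated through the graph with the $\pirule$ and $\hadamardrule$ rules, which amounts to adding algorithm‑dependent increments $c_j\in\{0,\pi\}$ (or, for Pauli‑flow corrections on non‑measured qubits, algorithm‑dependent offsets) to the phases $\alpha_j$ of unmeasured descendants in the partial order $\prec$. Crucially, by the ordering enforced by the flow, at the moment the client decides to send $\tilde{\alpha}_{2j}$ and $\tilde{\alpha}_{2j+1}$ for a not‑yet‑measured pair $\tilde{V}_{2j},\tilde{V}_{2j+1}$, the fresh random $\beta_j$ used to mask these two phases has never been revealed to any agent and is independent of all previous measurement outcomes.

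From this it follows immediately that the corrected angles transmitted to the agents retain the uniform masking property established in Theorem~\ref{fig:phase_splitting}. Indeed, with correction $c_j$ the client sends
\begin{equation}
\tilde{\alpha}_{2j} \;=\; (\alpha_j + c_j) - \beta_j + b_{2j}\pi, \qquad
\tilde{\alpha}_{2j+1} \;=\; \beta_j + b_{2j+1}\pi \pmod{2\pi},
\end{equation}
and since $\beta_j$ is uniform on $[0,2\pi)$ and independent of $\alpha_j$, $c_j$ and every prior outcome, each of the two transmitted values is marginally uniform and independent of the algorithm. The same argument applies block‑by‑block as one descends the partial order, because each new pair of spiders brings its own fresh mask $\beta_j$.

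The main obstacle I anticipate is the book‑keeping for Pauli‑flow corrections that act on multi‑qubit correction sets rather than single qubits, in the style of Definition~\ref{def:pauli_flow}: one has to verify that every spider whose phase is modified by a correction still has an unused, independent mask $\beta_j$ at the moment its angle is transmitted. This reduces to checking that whenever a correction targets a spider, that spider lies strictly later than the measured vertex in the partial order $\prec$, which is precisely the content of the Pauli‑flow conditions preserved by Theorems~\ref{fig:phase_splitting}, \ref{fig:internal_wires}, \ref{fig:double_extension} and \ref{fig:appendage_extension}. Once this ordering is made explicit, the independence argument above goes through uniformly, establishing that the correction process transmits no algorithm‑dependent classical information to any agent.
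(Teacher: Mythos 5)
Your proposal is correct and follows essentially the same route as the paper: it splits the possible leakage channels, dismisses the measurement outcomes via Theorem~\ref{theorem:measurement_result_independence}, and argues that the corrected phases remain masked because the fresh, uniformly random $\beta_j$ is independent of the algorithm-dependent correction offset --- which is exactly the content of the paper's appeal to Theorem~\ref{fig:phase_splitting}. You merely make the paper's second point more explicit (writing out $\tilde{\alpha}_{2j}=(\alpha_j+c_j)-\beta_j+b_{2j}\pi$ and noting the flow-ordering book-keeping), and you leave implicit the paper's observation that corrections do not alter connectivity, which is what justifies your claim that the transmitted angles are the only remaining channel.
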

\begin{proof}
The correction process requires the client to modify $\alpha_i$ based on the measurement result of previous spiders. The information may leak out from the connectivity of the ZX-diagram, the phase information of each spider, and the measurement outcome distribution. We now discuss each aspect separately.
\begin{enumerate}
\item Correction doesn't change the connectivity between spiders; therefore it doesn't invalidate theorem \ref{theorem:inter_block_connectivity_independence} or \ref{theorem:inner_block_connectivity_leakage}. 
\item Note that the value of $\alpha_i$ updates with the measurement outcome from previous steps for correction, and from theorem \ref{fig:phase_splitting}, $\Tilde{\alpha}_{2i}$ and $\Tilde{\alpha}_{2i+1}$ is independent after the correction process updates $\alpha_i$. Therefore, the correction process will not invalidate the independence between the phase and the actual algorithm. 
\item The correction process would change the distribution of the measurement outcome. However, from theorem \ref{theorem:measurement_result_independence}, each agent could not obtain any information from its measurement result. 
\end{enumerate}
Therefore, the correction process does not leak information.
\end{proof}

\begin{theorem}[Extraction universality]\label{theorem:extraction_universality}
ZX-diagram generated from the proposed protocol can always be extracted into practical quantum operations for real-world devices.
\end{theorem}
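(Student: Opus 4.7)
The plan is to trace through each obfuscation step in the protocol and verify that the resulting diagram satisfies the preconditions of theorem \ref{fig:wire_extraction}, then invoke that theorem together with the Pauli flow established in prior results to obtain a complete MBQC-compatible extraction.

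First, I would verify that the obfuscated ZX-diagram has at most one regular edge per spider. The only source of regular edges in the protocol is the phase obfuscation step (theorem \ref{fig:phase_splitting}), which splits each spider $V_i$ into a pair $\Tilde{V}_{2i}, \Tilde{V}_{2i+1}$ joined by a single regular wire. The subsequent obfuscation steps, namely internal-to-external edge relocation (theorem \ref{fig:internal_wires}), double subdivision (theorem \ref{fig:double_extension}), and dummy resource addition (theorem \ref{fig:appendage_extension}), only introduce Hadamard edges and additional Z-spiders. Consequently, every spider in the final diagram is incident to at most one regular edge, satisfying the precondition $size(\Tilde{E_N})=1$ required by theorem \ref{fig:wire_extraction}.

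Second, I would invoke theorem \ref{fig:wire_extraction} to turn each edge into a concrete gate sequence: every Hadamard edge becomes a CZ followed by Hadamards on both endpoints, and each regular edge becomes a CZ followed by a Hadamard on one endpoint. This yields a standard graph-state preparation circuit realizable on any gate-based device that supports CZ plus single-qubit rotations. To complete the MBQC execution, I would appeal to the fact that every obfuscation step was separately shown to preserve Pauli flow (lemma \ref{lemma:spider_rule_flow}, theorems \ref{fig:execution_strategy}, \ref{fig:internal_wires}, \ref{fig:double_extension}, and \ref{fig:appendage_extension}), so the composite diagram still admits a Pauli flow. The flow dictates a valid measurement order together with a correction strategy, so the extracted circuit combined with the flow-guided measurements faithfully realizes the original algorithm.

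The main obstacle is bookkeeping rather than a deep mathematical hurdle: one must carefully chain the invariant "at most one regular edge per spider" and the "Pauli flow is preserved" property through every rewrite in the protocol, and also check that measurement obfuscation (theorem \ref{fig:measurement_obfuscation}) does not disturb the extraction since it only modifies the classical measurement angles sent to each agent and leaves both the graph structure and the flow intact. Once these invariants are verified end-to-end, the extraction itself reduces to a direct application of theorem \ref{fig:wire_extraction}, and the theorem follows.
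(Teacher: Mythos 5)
Your proposal follows essentially the same route as the paper's own proof: establish that the only source of regular wires is the spider-splitting step so each spider carries at most one, invoke the edge-extraction method of theorem~\ref{fig:wire_extraction}, and combine this with the fact that every rewrite in the protocol preserves Pauli flow (which the original circuit-derived diagram possesses) to obtain an executable MBQC procedure. Your version is somewhat more explicit about chaining the invariants through each obfuscation step, but the argument and its dependencies are the same.
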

\begin{proof}
The original graph-like ZX diagram was converted by a quantum circuit. Therefore, it must admit a focused Pauli flow~\cite{Duncan2019Graph-theoreticZX-calculus} and can be executed with measurement-based quantum computation~\cite{Gross2007Measurement-basedModel}. All the rewrite rules used in our protocol preserve the Pauli flow; therefore the obfuscated diagram must also admit a Pauli flow. Hadamard wires can be implemented into a CZ gate to extract the ZX-diagram into quantum operations. The regular wire only comes from splitting the spiders. So, each spider can have at most one regular wire. Such diagrams can be extracted with method form example~\ref{fig:wire_extraction}. These rules included all possible diagrams that can be generated from our protocol.
\end{proof}

\begin{theorem}[Universality and correctness]
The modified ZX-diagram $\Tilde{\mathcal{G}}$ is universal for quantum computation, can always be implemented on a quantum device, and yields the same distribution as $\mathcal{G}$.
\end{theorem}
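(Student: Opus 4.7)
The plan is to derive each of the three claims --- distributional correctness, hardware implementability, and universality --- by chaining together the preservation lemmas already established earlier in the paper. The overall strategy is that every rewrite used in the preparation stage has been shown to (i) preserve the linear map represented by the diagram (possibly up to a classical post-processing that the client performs), and (ii) preserve Pauli flow. Composing these two preservation properties along the prescribed sequence of rewrites will deliver the theorem.

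First I would establish distributional correctness by tracking the preparation procedure step by step. Phase obfuscation (Theorem~\ref{fig:phase_splitting}), the internal-to-external edge rewriting (Theorem~\ref{fig:internal_wires}), the double-subdivision of Hadamard edges (Theorem~\ref{fig:double_extension}), dummy-resource insertion (Theorems~\ref{fig:appendage_extension} and~\ref{fig:extra_resources}), and measurement obfuscation (Theorem~\ref{fig:measurement_obfuscation}) are each equivalences of ZX-diagrams --- the first four as strict equalities of the underlying linear maps, and the last as an equivalence modulo the bitwise XOR $\Tilde{r}_j = r_j \oplus b_j$ that the client undoes classically. Since equivalence of linear maps is closed under composition, the fully obfuscated diagram $\Tilde{\mathcal{G}}$ realises the same map as $\mathcal{G}$. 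After the client reverses the $b_j$ flips and applies the Pauli-flow-directed correction to the output spiders, the sampled distribution of $\Tilde{r}_o$ therefore matches that of $\Lambda(\rho_0)$.

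Next I would establish implementability by invoking the extraction result (Theorem~\ref{fig:wire_extraction}), which requires $\Tilde{\mathcal{G}}$ to admit a Pauli flow and to have at most one regular edge incident to each spider. For the Pauli flow, $\mathcal{G}$ admits one because it was obtained from a quantum circuit; each rewrite used in the protocol preserves Pauli flow by Lemma~\ref{lemma:spider_rule_flow} and Theorems~\ref{fig:internal_wires}, \ref{fig:double_extension}, \ref{fig:appendage_extension}, so an induction along the sequence of rewrites delivers a Pauli flow on $\Tilde{\mathcal{G}}$. For the edge-count condition, regular edges only arise from phase obfuscation, which introduces exactly one per original spider; every subsequent step introduces only Hadamard edges. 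Hence each spider in $\Tilde{\mathcal{G}}$ carries at most one regular edge and Theorem~\ref{fig:wire_extraction} applies. Universality is then immediate, since $\mathcal{G}$ already encodes an arbitrary Clifford+T computation in graph-like form and $\Tilde{\mathcal{G}}$ realises the same linear map.

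The main obstacle I anticipate is bookkeeping the partial order $\prec$ across the full pipeline. When the obfuscation steps are composed --- especially when double-subdivision and appendage insertion are applied to edges that were themselves produced by pushing internal edges to adjacent blocks --- several new vertices are inserted around a given original vertex, and each must be placed into $\prec$ consistently with every clause of Definition~\ref{def:pauli_flow}. I would address this by processing the rewrites in the exact order prescribed by the protocol and extending $\prec$ locally at each step using the explicit flow constructions from the cited proofs, maintaining the inductive invariant that $\prec$ is a valid Pauli flow order on the current diagram. Once this invariant is verified, the three claims of the theorem fall out of the equivalence argument, the extraction theorem, and the universality of the input graph-like representation.
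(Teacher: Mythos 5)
Your proposal is correct and follows essentially the same route as the paper's own (much terser) proof: universality is inherited from $\mathcal{G}$, distributional correctness follows because every protocol rewrite is an equivalence (modulo the client's classical XOR corrections), and implementability follows from flow preservation plus the extraction theorem (Theorem~\ref{theorem:extraction_universality}). Your version is simply a more careful, expanded account of the same chain of citations, including the partial-order bookkeeping that the paper leaves implicit.
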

\begin{proof}
Any arbitrary ZX-diagram $\mathcal{G}$ with flow can be converted to $\Tilde{\mathcal{G}}$ following the protocol and $\mathcal{G}$ is universal for quantum computation. Therefore $\Tilde{\mathcal{G}}$ is universal and yields the same result as $\mathcal{G}$. Then from theorem \ref{theorem:extraction_universality}, any $\mathcal{\Tilde{G}}$ generated from $\mathcal{G}$ with our protocol preserves its flow and can be extracted into quantum operations can be executed on a quantum device.
\end{proof}

\begin{definition}[$\epsilon$-private \cite{Broadbent2015DelegatingComputations}]\label{definition:leakage}

A delegated quantum computation protocol requires the implementation of a linear map $\Phi_i$ on agent $A_i \in \mathbf{A}$ given classical information $q_i$. A simulator $\mathcal{S}_i$ has the same input and output space as $\mathbf{A}-A_i$, which can simulate the interaction between $A_i$ and $\mathbf{A}-A_i$. The agent $A_i$ interacts with $\mathcal{S}_i$, producing a linear map $\Psi_i$. The protocol is $\epsilon$-private if for every agent $A_i$ there exists such simulator $\mathcal{S}_i$ that $||\Phi_i-\Psi_i||_{\diamond}<\epsilon$, where $||\Phi_i-\Psi_i||_{\diamond}$ denote the diamond distance between $\Phi_i$ and $\Psi_i$.

\end{definition}
\begin{theorem}[Private]\label{theorem:private}
Our protocol is $0$-private.
\end{theorem}
\begin{proof}

The graph of each agent constructed gives the Choi–Jamiołkowski state $J(\Phi_i)$ of the linear map $\Phi_i$ \cite{CHOI1975285}. The information obtained by an agent is $q_i =  \{G_i,\{s_i\}\}$, where $G_i(E_N,E_H,V)$ is the graph fragment assigned to agent $A_i$. From theorem \ref{fig:phase_splitting}, \ref{theorem:inter_block_connectivity_independence}, \ref{theorem:measurement_result_independence}, $q_i$ is randomly distributed and independent to the quantum algorithm for execution. Therefore $J(\Phi_i)$ is a mixed state with some layout restrictions from constructing connectivity obfuscation in theorem \ref{theorem:inter_block_connectivity_independence} and \ref{theorem:measurement_result_independence}. Consider a simulator $\mathcal{S}_i$ that keeps the pre-shared entanglement pairs but does nothing on them. See the figure below. The layout of the graph representing the corresponding Choi–Jamiołkowski state $J(\Psi_i)$ (the graph in the right solid square) is in fact, identical to $J(\Phi_i)$ (in the left solid square), therefore $||J(\Psi_i) - J(\Phi_i)|| = 0$.

\ctikzfig{abstract/cj_private}

From relation  $\frac{1}{n}||\Phi_i-\Psi_i||_{\diamond}<||J(\Psi_i) - J(\Phi_i)||$ \cite{Distanceprocesses2005}, where $n$ is the size of the system, we conclude for our protocol is $0$-private.

\end{proof}

\begin{theorem}[Blindness] Our protocol is 0-private, and the information leakage would be at most ($max(deg(\Tilde{V}_i))$,$N(B_k)$,$n$) where $deg(\Tilde{V}_i)$ is the degree (number of wires connected to a spider) of $\Tilde{V}_i$, $max(deg(\Tilde{V}_i))$ is the maximum possible degree that $V_i$ could have. $N(B_k)$ is the qubit number of block $B_k$, $n$ is the of fragmented blocks in $\Tilde{G}$.\label{theo:blindness}
\end{theorem}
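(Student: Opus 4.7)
The plan is to check Definition \ref{definition:blindness} directly by enumerating every piece of information the agent can obtain, and verifying that once the leakage tuple $Y=L(X)=(\max(\deg(\Tilde{V}_i)),\,N(B_k),\,n)$ is fixed, each piece is independent of the algorithm $\mathcal{G}$. The agent's view decomposes into: (i) the classical instructions sent by the client, namely the wire identifiers $\{s_{i,k}\}$, the block assignment of its qubits, and the measurement angles $\{\Tilde{\alpha}_j\}$; (ii) the pre-shared Bell pairs it holds; and (iii) the measurement outcomes $\{r_j\}$ it observes during execution. I would handle these three sources in that order and then argue that they remain independent of $X$ jointly, not merely marginally.

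For the classical instructions, I would first invoke Theorem \ref{theorem:inter_block_connectivity_independence} to conclude that the $\{s_{i,k}\}$ labels are freshly random identifiers, independent of $\mathcal{G}$; then invoke Theorem \ref{fig:phase_splitting} to argue that each $\Tilde{\alpha}_j$, being of the form $\alpha_i-\beta_i+b\pi$ or $\beta_i+b\pi$ with $\beta_i$ uniform on $[0,2\pi)$ and $b$ a fresh uniform bit, is marginally uniform and independent of $\alpha_i$. The subtlety here is that the sequence of phases sent during one run is correlated through the correction step, so I would appeal to Theorem \ref{theorem:state_recovery_safety} to show that the updates applied during correction do not break the independence: since each $\alpha_i$ is re-randomized by an independent $\beta_i$ before being split, any function of earlier measurement outcomes is absorbed into the mask. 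This part handles the "classical information is independent of $X$" clause of Definition \ref{definition:blindness}.

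For the quantum state delivered to the agent, I would note that each qubit is initialized in $\ket{+}$ (or teleported through a Bell pair identified by $s_{i,k}$), so conditioned on the instructions the quantum state the agent holds before measurement depends only on the wiring pattern within its block and on $N(B_k)$. By Theorem \ref{theorem:inner_block_connectivity_leakage} the within-block wiring reveals at most $\max(\deg(\Tilde{V}_i))$, which is precisely one coordinate of $Y$; the qubit count is $N(B_k)$, the second coordinate; and the number of blocks the agent participates in is bounded by $n$, the third. Hence, given $Y$, the quantum system supplied to the agent is fixed up to the permitted leakage. Finally, for the measurement outcomes, I would invoke Theorem \ref{theorem:measurement_result_independence} to show that each observed bit $r_j=\Tilde{r}_j\oplus b_j$ is a fair coin independent of $\mathcal{G}$, and that the joint distribution of outcomes is uniform because the $b_j$ are independent.

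The main obstacle I anticipate is the joint independence claim: marginally each of phases, wire labels, and outcomes is independent of $X$, but to discharge clause (1) of Definition \ref{definition:blindness} one must show that the tuple of all of them, seen across the whole protocol and across all agents that collude up to the communication restriction, is independent of $X$ given $Y$. I would address this by producing an explicit simulator that, given only $Y$, samples the agent's view with the correct distribution: draw fresh $\{s_{i,k}\}$, fresh uniform $\Tilde{\alpha}_j$, fresh uniform $r_j$, and prepare $N(B_k)$ qubits wired according to an arbitrary graph consistent with $\max(\deg(\Tilde{V}_i))$. Matching the real distribution to this simulated one reduces, via the above three ingredients, to the already-established marginal independence results, and the theorem follows.
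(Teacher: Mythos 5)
Your proposal follows essentially the same route as the paper's proof: enumerate the agent's view (wire labels $s_{i,k}$, phases $\Tilde{\alpha}_j$, within-block wiring, measurement outcomes, and the correction updates) and discharge each piece by citing Theorems \ref{theorem:inter_block_connectivity_independence}, \ref{fig:phase_splitting}, \ref{theorem:inner_block_connectivity_leakage}, \ref{theorem:measurement_result_independence} and \ref{theorem:state_recovery_safety}, concluding that the residual leakage is exactly the tuple $(\max(\deg(\Tilde{V}_i)),\,N(B_k),\,n)$. Your one substantive addition --- the explicit simulator argument for \emph{joint} independence of the whole view given $Y$ --- goes beyond the paper, whose proof only establishes marginal independence of each component (and which also notes, as you do not, that the Hadamard placement in the circuit-extraction step of Theorem \ref{fig:wire_extraction} is algorithm-independent); this is a welcome strengthening rather than a different approach.
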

\begin{proof}
Client's input for each agent $A_m$ consists of $N(B_k)$,$n$, $\Tilde{\alpha}_i$, $s_{i,j}$ for all $V_i \in B_m$ and $V_j \in B_{l}$, where $B_l$ is all adjacent blocks of $B_m$, $E(\Tilde{V}_i,\Tilde{V}_{i^\prime})$ for $V_i,V_{i^\prime} \in B_m$.

\begin{enumerate}
    \item From theorem \ref{fig:phase_splitting}, $\Tilde{\alpha}_i$ is independent from the algorithm.
    \item From theorem \ref{theorem:inter_block_connectivity_independence}, $s_{i,j}$ is independent from the algorithm.
    \item From theorem \ref{theorem:inner_block_connectivity_leakage},at most $max(deg(\Tilde{V}_i))$ can be inferred by agent from the distribution of $E(\Tilde{V}_i,\Tilde{V}_{i^\prime})$.
    \item From theorem \ref{theorem:measurement_result_independence}, the measurement distribution of each qubit is independent of the algorithm.
    \item From theorem \ref{theorem:state_recovery_safety}, the correction process does not leak information.
    \item Each agent may know the total number of agents $m$, and infer the total block number $n$. 
    \item From example \ref{fig:wire_extraction}, each node can be extracted into quantum operation with or without a Hadamard gate. The existence of the Hadamard gate is independent of the algorithm.
    \item from theorem \ref{theorem:private}, our protocol is 0-private.
\end{enumerate}

Therefore, $A_m$ can get only $(max(deg(\Tilde{V}_i))$, $N(B_k), n)$ from the classical information it gets.
\end{proof}

The same as the UBQC protocol would inevitably disclose the size of the brickwork cluster state, our protocol also discloses some information about the resources required of the algorithm. UBQC uses a universal cluster state, which provides some surpluses of entanglement; therefore, UBQC protocol does not need to worry about the leakage of $max(deg(\Tilde{V}_i))$. Our proposal optimized the resource requirement, which discloses more information about required resources. However, such information can be hidden by allocating more resources and doing random operations on extra resources as long as it will not affect the computation result.

The secureness of our protocol requires that communication between different agents is limited. Except for the shared entanglement generated in advance, agents should not exchange any information during the execution. Such an assumption is difficult to be fulfilled indefinitely since two agents need to share entanglement. When there are collusive agents, blindness may be compromised. Here, we show that the blindness of our protocol would be compromised only when adjacent blocks are executed on two collusive agents. 

\begin{theorem}[Blindness compromise from collusive agents]
Information may leak out only when two adjacent blocks are executed on collusive agents.
\end{theorem}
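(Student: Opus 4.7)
The plan is to proceed by tracing, for each of the three obfuscation primitives in the protocol (phase splitting, connectivity routing through hub spiders, and random readout flips), precisely which pair of blocks shares the information that is individually useless but jointly de-obfuscating. First I would re-examine the phase-obfuscation construction from Theorem~\ref{fig:phase_splitting}: each original spider $V_i$ with phase $\alpha_i$ is replaced by the pair $(\tilde V_{2i},\tilde V_{2i+1})$ with $\tilde\alpha_{2i}+\tilde\alpha_{2i+1}=\alpha_i$, and by the depth-based partition one of the two lies in $B_k$ while the other lies in $B_{k\pm1}$. Hence recovering $\alpha_i$ from the clients's classical messages requires collusion between two agents holding adjacent blocks; any two agents holding only non-adjacent blocks see independently random phase shares $\beta_i$ and cannot sum them meaningfully.

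Next I would analyze the connectivity-obfuscation step. By Theorems~\ref{fig:internal_wires}, \ref{fig:seperate_wires} and \ref{fig:extra_resources}, every Hadamard edge of the original graph is replaced by a three-edge chain $\{V_i,W\},\{W,W'\},\{W',V_j\}$ where $W$ is assigned to $B(V_i)$ and $W'$ to $B(V_j)$, and these two blocks are adjacent by construction (and likewise for dummy appendages). Each agent only sees local edges to hub spiders labelled by random identifiers $s_{i,k}$; by Theorem~\ref{theorem:inter_block_connectivity_independence} these labels are algorithm-independent. To reconstruct an actual edge of the original graph an adversary needs to match $s_{i,k}$ on one side with $s_{i,k}$ on the other, which requires collusion across the two adjacent blocks hosting the endpoints of that chain. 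Non-adjacent colluding blocks share no such identifier, so no additional edge information can be recovered. Measurement obfuscation (Theorem~\ref{fig:measurement_obfuscation}) is entirely local: the random bits $b_j$ act within a single block, so pooling outcomes across non-adjacent blocks leaves the marginal distribution uniform.

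Having established that each obfuscation channel shares its sensitive correlations only between adjacent blocks, I would conclude by a compositional argument: the joint view of a set of colluding agents none of whom hold adjacent blocks is a product of the individual views conditioned on the public labels $\{s_{i,j}\}$ and cardinalities $(N(B_k),n)$, and by Theorem~\ref{theo:blindness} each individual view leaks at most $\bigl(\max\deg(\tilde V_i),N(B_k),n\bigr)$. The union therefore leaks no more than Theorem~\ref{theo:blindness} already allows. Conversely, if two agents holding adjacent blocks $B_k,B_{k\pm 1}$ collude, they can immediately sum $\tilde\alpha_{2i}+\tilde\alpha_{2i+1}$ to recover $\alpha_i$, match the $W,W'$ hubs via their shared $s_{i,k}$ labels to reconstruct an edge of $E_H$, and hence infer non-trivial structure of $\mathcal{G}$.

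The main obstacle, as I see it, is the rigorous conversion of the intuitive ``correlations live only across adjacent blocks'' statement into a formal independence claim about the joint classical transcript and quantum register of an arbitrary coalition of non-adjacent agents. In particular one must argue that the pre-shared Bell pairs carrying the labels $s_{i,j}$ do not introduce a back-channel: because each Bell pair is shared strictly between the two adjacent block-owners of its endpoints, a non-adjacent coalition has no access to the other half of any relevant pair, and the reduced state they see is maximally mixed. Making this precise is essentially a bookkeeping exercise over the preparation step of the protocol, but it is the only place where the ``limited communication'' assumption enters the proof, so it deserves explicit treatment.
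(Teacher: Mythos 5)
Your proposal is correct and follows essentially the same route as the paper's proof: adjacent blocks hold the two halves of each split spider and the two endpoints of each subdivided edge, so only adjacent collusion can reverse the obfuscation (by un-applying \SpiderRule and matching hub identifiers), while a non-adjacent coalition is no stronger than a single agent holding those blocks and is therefore bounded by Theorem~\ref{theo:blindness}. Your version is considerably more detailed than the paper's brief argument and usefully makes explicit the factorization-of-views claim and the role of the pre-shared Bell pairs, which the paper leaves implicit.
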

\begin{proof}

When the attacker obtains information on two adjacent blocks, the attacker can apply the $\spiderrule$ rule to reverse the spider splitting and find the rotation angle or find a portion of connectivity in the original ZX-diagram. When attackers obtain information from non-adjacent blocks, it is equivalently to assign those non-adjacent blocks to the same agent. The attacker obtained the information from that single agent. Therefore from theorem \ref{theo:blindness} the information can be recovered is still $(max(deg(\Tilde{V}_i))$, $N(B_k), n)$.

\end{proof}

Although the proposed protocol only requires pre-shared bell pairs between agents, no information needs to be exchanged between agents at the run time.

Instead of physically limiting communication, the assumption can still be fulfilled with a decentralization strategy. 
For example, two quantum agents can be allocated from two different quantum service providers, and therefore it would be less likely to have two providers collude and compromise the blindness. More agents can also be introduced to have less chance of two adjacent blocks executed on collusive agents. Such relaxation is relatively weak since other strategies might be available if agents are honest and only exchange information the client allows. This relaxation allows the information exchange between agents even while executing the algorithm. Our protocol requires no information exchange between agents after the initial cluster state has been prepared. Our protocol would still be functioning if there were physical methods that could limit the communication between agents discovered in the future.

\section{A minimal example of our protocol\label{sec:swap_test}} 

In this section, we walk through a minimal example to implement a two qubits swap-test algorithm with the Hong-Ou-Mandel model~\cite{CarlosGarcia-Escartin2013TheEquivalent}. This algorithm does a CNOT gate and a Hadamard gate. The state overlap can be calculated based on the joint distribution of $O_1$ and $O_2$. We ignore the measurement obfuscation step for simplicity.

\begin{center}
    \tikzfig{minimal_example/circuit}
\end{center}
    
The circuit is written into the ZX-diagram as follows.

\begin{center}
    \tikzfig{minimal_example/zx-1}
\end{center}

And then converted into a graph-like ZX-diagram. Note that $\alpha_1 = \alpha_2 = 0$.

\begin{center}
    \tikzfig{minimal_example/zx-2}
\end{center}

Now, split each spider into two to make phase obfuscation. The value of $\Tilde{\alpha}_1$ and $\Tilde{\alpha}_3$ can be random, as long as $\Tilde{\alpha}_2 = -\Tilde{\alpha}_1$, $\Tilde{\alpha}_4 = -\Tilde{\alpha}_3$.

\begin{center}
    \tikzfig{minimal_example/zx-3}
\end{center}

Move the connectivity within the same block to another spider, making it an inter-block connectivity.

\begin{center}
    \tikzfig{minimal_example/zx-4}
\end{center}

Grows extra spider and finish the connectivity obfuscation.

\begin{center}
    \tikzfig{minimal_example/zx-5}
\end{center}

To construct the diagram above, each agent only requires a shared bell state at the beginning.

\begin{center}
    \tikzfig{minimal_example/zx-execute-1}
\end{center}

Then apply the Hadamard gate as an example \ref{fig:wire_extraction} to convert Hadamard edges to normal edges.  
\begin{center}
    \tikzfig{minimal_example/zx-execute-2}
\end{center}

Then apply entanglement operation within each agent.

\begin{center}
    \tikzfig{minimal_example/zx-execute-3}
\end{center}

To execute the algorithm, here we follow the standard MBQC protocol. First, the regular edges are merged, and the new spider represents the sum of the phase from two old spiders as example \ref{fig:execution_strategy}. These spiders are labelled with the red star symbol. 

\begin{center}
    \tikzfig{minimal_example/zx-execute-4}
\end{center}

Here we present a Pauli flow and the corresponding correction set. Define partial order  

\begin{equation}
    \prec := \{(v_1,v_2,v_6,v_7,v_8,v_8,v_4,v_{10})<(Q_1,Q_2)\}\bigcup\{(Q_1,Q_2)<(v_3,v_9)\}\bigcup\{(v_3,v_9)<(v_5,v_{11})\}
\end{equation}

where we define $A<B := \bigcup{\{(a,b)\}},\forall a\in A \text{~and~} \forall b\in B$.

\begin{center}
\begin{tabular}{ |c|c|c|c| } 
\hline
Vertex $u$ & Measurement Plane $\lambda(u)$& Correction set $f(u)$ & $Odd(f(u))$ \\
\hline
$Q_1$,& XY &  \{$v_1,v_3,v_5,v_7$\}   & \{$Q_1$,$v_{11}$\}\\ 
$Q_2$,& XY  & \{$v_8$\}   & \{$Q_2$,$v_9$\}\\
$v_1$ & X  &  \{$v_2$\}   & \{$Q_1$,$v_3$\}\\
$v_2$ & X  &  \{$v_3,v_5,v_7$\}   & \{$v_2$,$v_{11}$\}\\
$v_3$ & XY  & \{$v_4$\}   & \{$v_3$,$v_5$\}\\
$v_4$ & X   & \{$v_5$\}   & \{$v_4$\}\\
$v_5$ & XY  & N/A         & N/A\\
$v_6$ & X   & \{$v_7$\}   & \{$v_6$,$v_{11}$\}\\
$v_7$ & X   & \{$v_{6}$\}& \{$v_{3}$,$v_7$\}\\
$v_8$ & X   & \{$v_9,v_{11}$\}   & \{$v_8$\}\\
$v_9$ & XY  & \{$v_{10}$\}& \{$v_9$,$v_{11}$\}\\
$v_{10}$& X & \{$v_{11},v_{6}$\}& \{$v_{10}$,$v_3$\}\\
$v_{11}$& XY& N/A&N/A\\
\hline
\end{tabular}
\end{center}

Based on the Pauli flow configuration, the algorithm can be executed with 4 steps. First all the qubits corresponds to vertices$ \{v_1,v_2,v_6,v_7,v_8,v_8,v_4,v_{10}\}$ and then measure $\{Q_1,Q_2\}$,then $\{v_3,v_9\}$, and finally $\{v_5,v_{11}\}$. As an example here we demonstrate the correction when $v_2$ is measured with unexpected results. First, we highlight all the vertices from the correction set of $v_2$.

\begin{center}
    \tikzfig{minimal_example/zx-execute-5}
\end{center}

Now we apply $\pirule$ and $\identityrule$ and $\spiderrule$ to emit red spiders with pi phase on all the vertices in the correction set.
\begin{center}
    \tikzfig{minimal_example/zx-emit}
\end{center}

\begin{center}
    \tikzfig{minimal_example/zx-execute-6}
\end{center}

Now push these red spiders through the Hadamard edge, and they become green spiders.

\begin{center}
    \tikzfig{minimal_example/zx-execute-7}
\end{center}

Merge green spiders to cancel the unexpected measurement result. 
\begin{center}
    \tikzfig{minimal_example/zx-execute-8}
\end{center}

\section{Discussion\label{sec:discussion}}

\subsection{Resource cost comparison between UBQC and our protocol} 

In this section, we quantify the resource requirements for implementing the UBQC and our proposed protocols and demonstrate the significant advantage of resource cost reduction compared to the UBQC protocol. Our proposed protocol distinguishes itself from the UBQC protocol primarily in how it implements connectivity obfuscation of the quantum algorithm. In the context of GBQC, connectivity refers to the layout of the quantum circuit, while in MBQC, connectivity is represented by the edges of the cluster state. In the ZX diagram, connectivity is depicted using a similar representation as wires between spiders. The resources we consider include the total number of qubits utilized by the protocol, the quantity of pre-shared Bell pairs necessary to establish entanglement between agents, and the overall number of two-qubit entanglement gates required within each agent.

The UBQC protocol achieves connectivity obfuscation by creating identical graphs for all algorithms, requiring the preparation of a universal cluster state that is algorithm-independent and information-free. Such universal cluster states restrict information to be stored in the measurement angles, thereby introducing redundancy to the graph. In contrast, our proposed protocol obfuscates connectivity by separating the two endpoints of the wires into two agents, enabling the layout of the ZX-diagram to carry information. Each agent is aware of a pre-shared Bell pair between itself and a neighbouring agent, however, it is uncertain which qubit the Bell pair is entangled with, resulting in a loss of information carried by the entanglement of the Bell pair. An example of resource reduction is shown in the following:

\begin{example}
An example of comparing the required resources to implement an arbitrary gate on two qubits. Here we denote the blue line as the entanglement generated from local entanglement gates within an agent, and the red lines are entanglement established by pre-shared Bell pairs or teleportation between agents. Note that this gate doesn't have to be a two-qubit entanglement gate; it can be two single qubit gate acts on two qubits separately. 
 (a) The brickwork cluster state is used in the UBQC protocol for implementing a quantum gate. (b-d) The implementation of the protocol proposed in this study. The entanglement gate can be implemented by both (b) and (c), and the two single qubit gates can be implemented by (d). The number of qubits and entanglement gates and pre-shared Bell pairs from our protocol (b-d) are less than UBQC protocol (a). \SXE{ The required measurement steps are one fewer than those in the brickwork state. As each column can be measured together, each block showing in (b,c,d) requires a maximum of 4 steps to execute. On the other hand, the brickwork state (a) incorporates an additional step: measuring all remotely entangled qubits (attached by red-coloured edges) to implement the teleportation. As for the number of measurements executing each block, the brickwork typically requires measuring 16 qubits, whereas our protocol requires a maximum of 10.}\vskip 1em

\centering
\tikzfig{proof/resources}
\end{example}

Now we move on to more general cases. For simplicity, we suppose the algorithm is decomposed into a gate set containing only local single-qubit gates and a CZ gate on nearest neighbours. These gates can be implemented directly with one ``brick'', the fundamental component in the brickwork cluster state. Denote $d$ as the circuit depth, $w$ as the circuit width or the number of qubits and $t$ as the number of two-qubit gates. For the worst that all gates are two-qubit gates, there are $\frac{1}{2}dw$ two-qubit gates. Therefore $t \leq \frac{1}{2}dw$. Here the entanglements between different agents are established by pre-shared Bell pairs, and entanglements within the same agent are implemented by entanglement gates performed by the agent. The transmission of quantum data between agents and clients is also considered using a Bell pair.%

For the brickwork state, each ``brick'' includes eight qubits and eight internal entanglement operations. Each ``brick'' hosts two qubits from the original algorithm. For both single-agent and multi-agent versions, each qubit needs to share a Bell pair with another agent or the client. In total, the qubits required to implement brickwork state is $\frac{1}{2}w\times8\times d+w = (4d+1)w$. For the semi-classical client UBQC, the brickwork state can be constructed and executed in sequence to recycle qubits. This strategy reduces the requirement qubit number to $2w+1$. The amount of Bell pairs is $(4d+1)w$. The amount of local entanglement gates is $\frac{1}{2}w\times8\times d$. For the single-agent version, an extra qubit is required; for the multi-agent version $(4d+1)w$ qubits are required. 

For our proposal, without any simplification, each qubit is split into three, and the qubits in the input and output blocks are split into two. Each two-qubit gate requires two more qubits, two more internal entanglements, and one extra external entanglement. For each dummy connection, two extra qubits, one Bell pair, and two local entanglements gates are required. Therefore our proposal requires $3(d-2)w + 2t$ qubits, $(d-1)w + t$ Bell pairs, $2(d-2)w+2t$ local entanglement gates.

See the table below to summarize the comparison.

\[
\centering
\begin{tabular}{llll}
                    Number of & Single-agent UBQC& Multi-agent UBQC & Our proposal \\ \hline
Agents        & $1$          & $\geq2$            & $\geq2$\\                  
Qubits        & $2w+1$ &$2(4d+1)w$          & $3(d-2)w + 2t$\\
Bell pairs & $(4d+1)w$ & $(4d+1)w$    & $(d-1)w + t$\\     
Local entanglement gates & $8dw$ & $8dw$     & $2(d-2)w+2t$\\     
\end{tabular}
\]

There are extra advantages to our protocol compared to UBQC. First, our protocol can implement non-nearest-neighbor entanglement directly. With the UBQC protocol, two qubits must be swapped to an adjacent position to perform the two-qubit gate, which adds extra cost to the implementation. Secondly, although the brickwork cluster state is universal, it is not intuitive to directly implement gates such as controlled single-qubit arbitrary rotation. Such gates can be decomposed into a ZX-diagram and directly implemented. Also, the quantum circuits can be simplified first with existing techniques from ZX-Calculus~\cite{Duncan2019Graph-theoreticZX-calculus} before applying our protocol. The graph-like ZX-diagram can be optimized until it only contains nodes representing non-Clifford operations. Such optimization can be considered the classically simulatable part of the quantum algorithm simplified from the diagram. Suppose the non-Clifford operation count is $c$, then the optimal qubit number would be $2c$, and the entanglement number would depend on the algorithm. Such a method can significantly reduce the resource requirement of our protocol.

\subsection{Compatibility with existing verification protocols}
\SXE{The universality of ZX-Diagram provides compatibility with most of the existing verification protocols. However, since some verification protocols require a universal cluster state, combining these verification protocols would invalidate our resource requirement advantage compared to the UBQC protocol. Here we discuss the ``first-order'' compatibility of verification protocols \cite{Gheorghiu2019VerificationApproaches}. The rigorous compatibility and full analysis of security with detailed proof, however, is beyond the topic of this paper.}

\SXE{Verification can be implemented by embedding a quantum circuit into the original algorithm that gives a deterministic result when the algorithm has been faithfully computed. The authentication-based verification method~\cite{Aharonov2017InteractiveComputations} extends the Quantum Authentication Schemes (QAS) as the embedded circuit. The trap-based verification method  \cite{Fitzsimons2017UnconditionallyComputation,Fitzsimons2017PrivateProtocols} utilize tapped wires or stabilizer codes for the embedded circuit. These embedded circuits can be converted into ZX-diagram and processed with our protocol. It is worth mentioning that work from \cite{Backens2013TheMechanics} makes it even more convenient to embed the stabilizer codes based on ZX-diagram.}

\SXE{Verification can also be implemented with the run and test scheme. The agent is asked to do calculations multiple times. The client randomly selects some of these calculations as test runs that run an algorithm in which the measurement distribution is known. The proposal from \cite{Broadbent2018HowComputation} suggests running the circuit in different initial states indistinguishable from the agent and using some of them as the tests. This method is compatible with our protocol since the initial state can be prepared arbitrarily and indistinguishable from the agent. The proposal from \cite{Hayashi2015VerifiableTesting} implement the test run with the same cluster state as the computation but modifies the measurement angle. Since our protocol no longer uses a universal cluster state, this proposal is invalid. However, we can still use the ZX Calculus to find phases for the same diagram layout but it gives a known probability distribution. If each spider's phase is chosen carefully, the ZX-diagram can be efficiently simulated~\cite{Duncan2019Graph-theoreticZX-calculus}.}

\SXE{Verification can be implemented with entanglement-based protocols. Proposals from \cite{Gheorghiu2015RobustnessComputing,Reichardt2013ClassicalSystems} make use of CHSH games and proposals from \cite{McKague2016InteractiveStates} utilize a self-testing graph states for verification. These methods all use self-testing results and pass the verification when the winning rate agrees with the prediction of quantum mechanics. These proposals are all compatible with ours; however, the self-testing graph protocol requires implementing a complicated graph state, which would invalidate our advantage compared to the UBQC protocol.
}

\section{Conclusion}\label{sec:conclusion}

We propose a multi-agent blind quantum computation protocol based on ZX-Calculus in this work. The quantum algorithm is first written into a ZX-diagram and then modified to be extracted into an MBQC-style algorithm. Then the algorithm is executed across multiple agents. We show that the information leakage to every agent is minimal, and our protocol's security can be guaranteed under the assumption that communication between agents is limited. Our proposal does not require a universal cluster state compared to the UBQC protocol. This advantage makes our protocol more flexible and efficient.

\section*{Acknowledgement}

We thank Miriam~Backens for providing the proof of theorem \ref{fig:double_extension} and \ref{fig:appendage_extension} from her work before publication. We thank John van de Wetering for reviewing this work and providing constructive comments. We thank Anne Broadbent, Lia~Yeh, Niel~de~Beaudrap, Quanlong~Wang, Xiao~Yuan, Brian~Vlastakis, and Peter~Leek for insightful discussions. We acknowledge the usage of the PyZX package~\cite{Kissinger2019PyZX:Reasoning} and the TikZiT tool.

\bibliography{main}

\end{document}